\newtheorem{thm}{Theorem}[section]
\newtheorem{lem}[thm]{Lemma}
\theoremstyle{remark}
\theoremstyle{definition}
\newcommand{\C}[1]{C\left( #1 \right)}
\newcommand{\SNIR}[2]{\frac{ #1 }{ 1 + #2}}
\begin{document}
\title{Low-Rate Machine-Type Communication via Wireless Device-to-Device (D2D) Links}

\author{\IEEEauthorblockN{Nuno K. Pratas and Petar Popovski}
\IEEEauthorblockA{\\Department of Electronic Systems, Aalborg University, Denmark \\
Email: \{nup,petarp\}@es.aau.dk}}
\maketitle

\begin{abstract}
Wireless cellular networks feature two emerging technological trends.  
The first is the direct Device-to-Device (D2D) communications, which enables direct links between the wireless devices that reutilize the cellular spectrum and radio interface.
The second is that of Machine-Type Communications (MTC), where the objective is to attach a large number of low-rate low-power devices, termed Machine-Type Devices (MTDs) to the cellular network.
MTDs pose new challenges to the cellular network, one if which is that the low transmission power can lead to outage problems for the cell-edge devices. Another issue imminent to MTC is the \emph{massive access} that can lead to overload of the radio interface. 
In this paper we explore the opportunity opened by D2D links for supporting MTDs, since it can be desirable to carry the MTC traffic not through direct links to a Base Station, but through a nearby relay. MTC is modeled as a fixed-rate traffic with an outage requirement. 
We propose two network-assisted D2D schemes that enable the cooperation between MTDs and standard cellular devices, thereby meeting the MTC outage requirements while maximizing the rate of the broadband services for the other devices.
The proposed schemes apply the principles Opportunistic Interference Cancellation and the Cognitive Radio's underlaying.
We show through analysis and numerical results the gains of the proposed schemes.
\end{abstract}

\begin{IEEEkeywords}
\textbf{D2D; M2M; MTC; Underlaying; Cognitive Radio; Multiple Access Channel; SIC; OIC}
\end{IEEEkeywords}


\section{Introduction}
\label{sec:Introduction}

Wireless cellular networks are embracing two technological trends, different from the mainstream technologies to improve data rates and coverage. The first trend is enabling of direct \emph{Device-to-Device (D2D)} communications, by enabling direct links between the wireless devices using the same spectrum and radio interface used for cellular communications. The second trend is that of \emph{Machine-Type Communications (MTC)} or \emph{Machine-to-Machine (M2M)} communications~\cite{boswarthick2012m2m,Estrin:1999:NCC:313451.313556}, where the objective is to attach a large number of low-rate low-power devices, termed \emph{Machine-Type Devices (MTDs)} to the cellular network. The drivers for this increase are the introduction of the smart-grid, large scale environment sensing, asset and health monitoring, among others.

The motivation for introducing D2D communication has primarily been seen in the possibility to offload the cellular infrastructure whenever identical content needs to be served in the same localized area, but also in the possibility to have reliable peer-to-peer links in a licensed spectrum for sharing e.g. multimedia content. In these cases the data volumes are relatively large and therefore is worthwhile to take advantage of the high-rate D2D links. 

We observe that it can be desirable to carry the machine-type traffic not through direct links to a Base Station (BS), but through a nearby relay. Therefore, in this paper we explore the opportunity opened by D2D links for supporting MTDs, for which the motivation is two-fold:
\begin{itemize}
\item \emph{Low power consumption:} MTD has the opportunity to connect to the infrastructure through a nearby device, then it can lower the transmission power;
\item \emph{Access load relief}: There could be potentially thousands of MTDs connected to the same cell~\cite{Shafiq:2012:FLC:2318857.2254767,5956587}, such that the management of many individual connections can incur a large overhead and inefficiency in the cellular system. If multiple MTDs are connected to a relaying node and the relaying node accesses the BS on behalf of those devices, then the connection management load on the BS can drastically decrease.
\end{itemize}
Critics may point out two counterarguments not to use relay for machine-type traffic.

The first is that the relay increases the connection latency; however, there is a large variety of machine-type applications and not all of them require delay in milliseconds~\cite{boswarthick2012m2m}. Examples include heat meters, health sensors, etc. and relaying can be restricted to the traffic that has a higher delay tolerance. 

The second counterargument is that it is not feasible to introduce infrastructure relay, and this sets the motivation to relay the MTC traffic through another, more capable device, such as smart phone, using the D2D link. The setup is shown on Fig.~\ref{fig:Scenario_Illustration}. The difference with the common usage of the D2D transmission is that here the D2D link is used for sending at a low, fixed rate and the objective is to attain certain outage probability. The scheme in this paper treats the D2D transmission from the MTD \textbf{M1} to \textbf{U1} as an \emph{underlay} transmission, occurring simultaneously with a downlink transmission from base station \textbf{B} that is intended for another device \textbf{U2}. This creates a multiple access channel at the receiver \textbf{U1}. However, the key observation is that both the power and the data rate of the MTD are low, which leads to high probability to decode the MTD signal. The decoding can occur either directly or after canceling the interference caused by the downlink signal from the BS. The latter is reminiscent of the concept of Opportunistic Interference Cancellation (OIC), proposed for cognitive radio~\cite{5714150}, where the primary signal can be canceled to increase the secondary data rate. This connection is due to the fact that a D2D link acts almost as a underlay cognitive radio. The subtle difference is that here the role of the primary transmission is taken by the BS downlink transmission, but the BS is aware about the D2D link and can therefore adjust the power/rate of its transmission in order to enhance the D2D transmission, unlike the standard, non-adaptive primary in a cognitive radio model.  
\begin{figure}
	\centering
		\includegraphics[width=\linewidth]{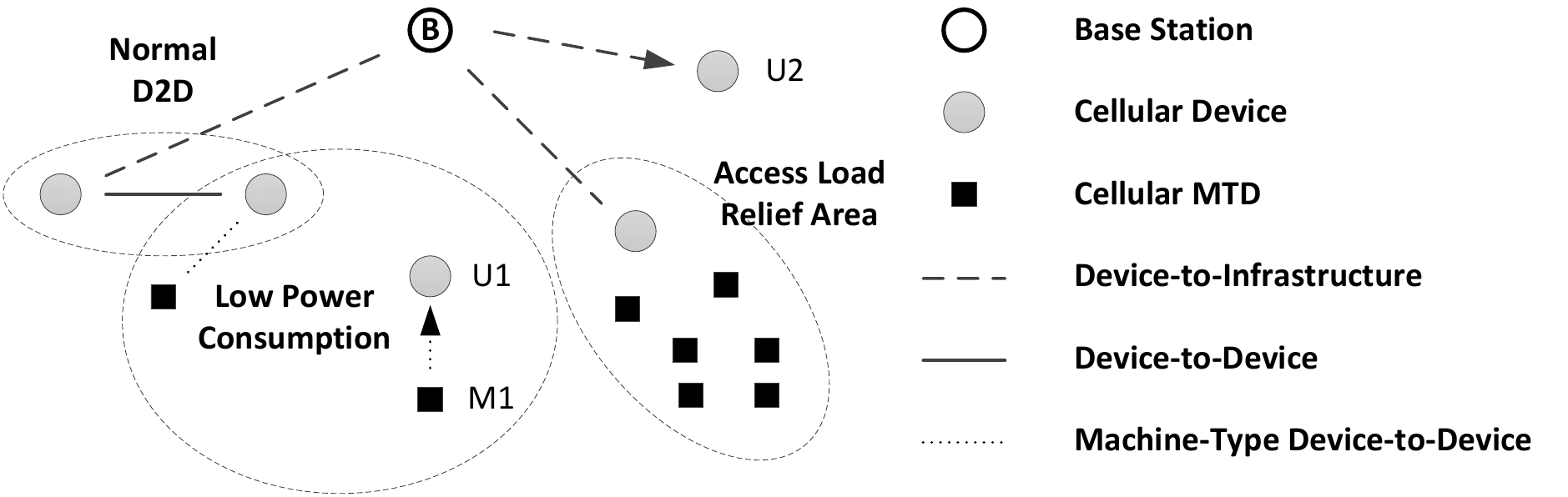}
	\caption{Scenario illustration.}
	\label{fig:Scenario_Illustration}
\end{figure}

All the observations above are the building blocks for the underlayed communication schemes proposed in this paper.
The paper is organized as follows.
In Section~\ref{sec:StateOfTheArt} we present the related work on D2D and MTC. While in Section~\ref{sec:SystemModel} we described the system model considered and the main assumptions.
In Section~\ref{sec:UnderlayedSchemesCharacterization} is given the analytical characterization of the proposed underlaying communication schemes, for which the performance is evaluated numerically in two different settings in Section~\ref{sec:NumericalResults}.
Finally, in Section~\ref{sec:Conclusion} we conclude the paper with a recap of the main conclusions and future outlook.

\section{Related Work}
\label{sec:StateOfTheArt}

The research literature and  3GPP feature several efforts to enable efficient MTC within GSM and LTE cellular networks~\cite{Chen2013,TR37.8682011}.
The current focus is mostly on the characterization of the effect of a massive amount of MTDs requesting access in a cell~\cite{5956587,Madueno2013,Shafiq:2012:FLC:2318857.2254767}, on adaptive contention and load control schemes to handle massive contention access~\cite{6477838,6336861,6093905,6211484} and on traffic aggregation in cellular devices from non-cellular networks~\cite{6507400}. In 3GPP there is an effort on enabling low cost LTE devices~\cite{6507400,3GPPTR36.888} for MTC, where the reduction in cost comes mainly from lowering the device bandwidth, baseband processing complexity, and using half-duplex transceivers.
From the many challenges to enable these devices to coexist with normal LTE devices~\cite{6215486}, the main one is the reduced coverage, which is commonly addressed by reducing the cell sizes, introducing relays, or increasing the transmission time.
The latter option is the one gaining traction~\cite{6215486,6507400}, since the former ones would lead to prohibitively expensive solutions.

In contrast, in this paper we motivate that the coverage of MTDs can be enhanced through the use of the D2D communication paradigm.
We note that within the literature dedicated to D2D, the synergies to be gained by combining MTC and D2D have not until now been identified, although the authors of~\cite{1673062} have motivated that the use of relaying in the uplink direction in cellular network is one of the key aspects to enable MTC within a cellular network.

Within the D2D context, there are already several proposals on underlaying~\cite{5350367,5208020,PekkaJANIS2009} and overlaying~\cite{5074679} D2D with a cellular network.
Underlaying schemes where the D2D shares the same air interface as the cellular network have been proposed first in~\cite{5350367,5208020,PekkaJANIS2009}.
Further, we note that the approaches currently applied to D2D are applicable to Vehicular-to-Vehicular (V2V), as shown in~\cite{6450121}. There are two main design directions currently found in the literature:
\begin{itemize}
	\item \emph{Network-assisted D2D}. The network performs all the decisions in regards to resource sharing mode selection (D2D or via the cellular infrastructure), power control, scheduling, selection of transmission format (such as modulation, coding rates, multi-antenna transmission mode, etc.)~\cite{5350367,5208020,5199353,5506248,5450284,PekkaJANIS2009,6163598,5073734,6450121,6162471,6047553,5910123,6364509,5645039};
	\item \emph{D2D with minimal network assistance}. The network provides at most only synchronization signals to the devices~\cite{5675775,6133537,5073611,6497679}.
\end{itemize}

In~\cite{5675650} the authors introduce three receiving modes for reliable D2D communication when the D2D UEs share the cellular radio resources. The first mode treats the interference as noise, the second mode decodes the interference and then cancels it. In low-interference regime, the first mode offers a higher sum rate, as in such regime treating interference as noise is optimal~\cite{5075903}. The second mode is applicable for the regime of very strong interference in which interference cancelation can be applied~\cite{4303349}. For the transient region between low and very strong interference, instead of the conventional approach with orthogonalization, the authors propose a mode in which the interference is retransmitted to the receiver which then cancels it from the original transmission. Nevertheless, the third mode is only feasible with multiple antennas\footnote{We note that the schemes proposed in this paper are not dependent on multiple antenna capability, although they could be greatly enhanced by it, but such discussion is left for future works.}.

The underlaying approach in this paper takes advantage of Opportunistic Interference Cancellation (OIC) proposed first in the context of cognitive radio in~\cite{5714150}. The main idea of OIC can be explained as follows. The receiver  $U$ observes a multiple access channel, created by the desired signal and an interfering signal. The interfering signal is a useful signal for a different node. However, if the current channel conditions and the rate selection allow, then $U$ can decode/cancel the interference and retrieve the desired signal at a higher rate. 

To the best of our knowledge, the MTC setting has not been considered within the context of D2D, underlaying and cognitive radio. 
The nearest match within cognitive radio can be found on~\cite{5291646} although there OIC was not considered.
We note that interference cancellation schemes have recently stirred increased attention both in academia~\cite{6182560,6155698} and industry~\cite{patent:6801580,patent:8433349}. 

The main contributions of the work presented in this paper, in contrast with the works currently present in the literature, can be summarized as follows:
\begin{itemize}
	\item Identification of the synergies between D2D, Cognitive Radio and MTC;
	\item Downlink (Uplink) underlaying scheme, which maximizes the downlink (uplink) cellular transmission rate, while attaining an outage rate and probability for the MTC D2D links, through the use of OIC at the cellular user (base station).
\end{itemize}

\section{System Model}
\label{sec:SystemModel}

Figure~\ref{fig:SystemModel_and_SchedulingOptions}(a) shows the network topology that captures the quintessential example of the transmission schemes proposed in this paper. The links will be denoted by X-Y, where it is understood that X is the transmitter and Y is the receiver. All nodes are half-duplex, i. e. a node cannot transmit and receive at the same time. 
These network nodes are characterized as follows:
\begin{itemize}
	\item A Base Station (B) capable of adapting the transmission rate and multipacket reception through Opportunistic Interference Cancellation (OIC)~\cite{5714150};
	\item Two \textit{normal} devices (U1 and U2) capable of adapting the transmission rate and multipacket reception through OIC; and
	\item Two MTDs M1, M2 with low complexity, unable to perform advanced processing, such as multipacket reception. They use low power and a low, fixed rate $R_M$.
\end{itemize}
\begin{figure}
	\centering
		\includegraphics[width=\linewidth]{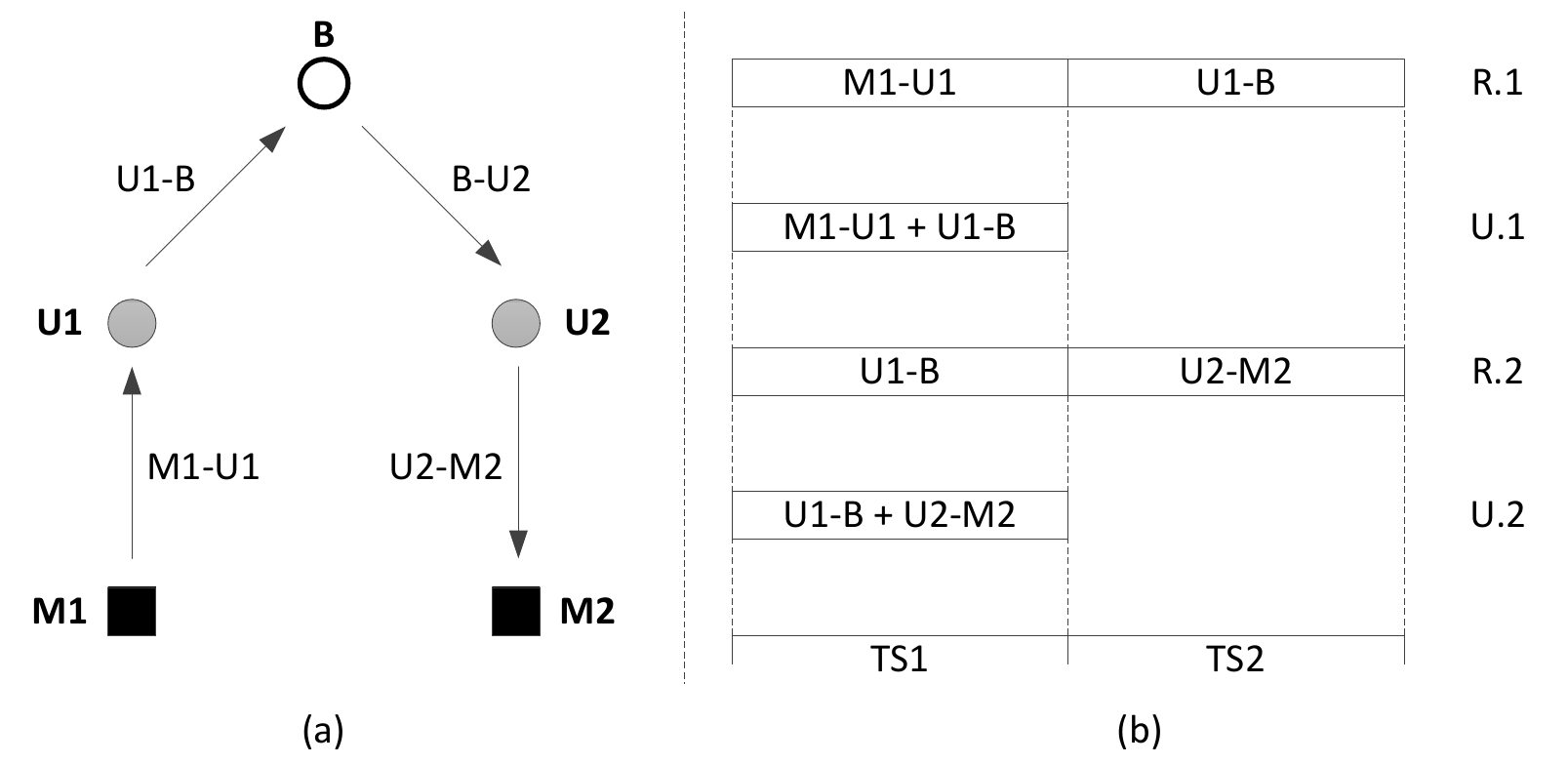}
	\caption{System model. (a) the four possible information flows; (b) The coupling and respective reference schemes duration in number of time slots.}
	\label{fig:SystemModel_and_SchedulingOptions}
\end{figure}

The main ingredient of the proposed scheme is the \emph{underlaying} i. e. simultaneous activation of the links M1-U1 and B-U2. Clearly, the underlaying would work regardless whether the transmission B-U1 carries data for U1 only or it also contains data that needs to be relayed over the link U2-M2 at a future point. Another underlaying occurs between the links U1-B and U2-M2. If the data sent from U2 is only relayed data that has been previously received from B, then the signal transmitted by U2 is completely known by B and can be canceled, such that B can receive the transmission from U1 over a ``clean channel''. Such schemes have been treated in~\cite{5722080,6384616,6310826,6259833,5962696,6216883}. However, here we leave the opportunity that U2 sends to M2 data that may have originated at U2 or at another MTD connected to U2, such that the signal sent by U2 cannot be assumed \emph{a priori} known by B. Therefore, B should try to decode the signal from U2 if it needs to cancel this interference. 

We propose a network-assisted underlaying solution that, based on the collected Channel State Information (CSI) and machine link outage requirements, selects the transmission power and maximal rate of the transmitters at cellular links U1-B and B-U2.
The links M1-U1 and U2-M2 are where the MTC occur. Any transmission to/from MTD is done over a D2D link and at a fixed rate $R_M$, with a requirement on the maximal outage probability of $P_{Out}$. On the other hand, the links U1-B and B-U2 adapt the rate/power to the channel conditions.

Fig.~\ref{fig:SystemModel_and_SchedulingOptions}(b) shows the proposed underlayed transmissions and the respective reference schemes, along with the required number of Time Slots (TS).
The total duration in time slots of each of these schemes is denoted as a \emph{scheduling epoch}. In the reference schemes the underlaying of a link with another does not take place.
All the transmissions have normalized bandwidth of $1$ Hz, therefore the duration of a time slot can be measured in terms of the number of symbols $N$.
The analysis here is done by considering that all communications are performed using capacity-achieving Gaussian codebooks. Furthermore, at most two simultaneous transmissions are considered over the multiple access channel, the other transmissions are modeled as noise. The receivers that apply OIC are aware of the codebooks used by the transmitters. We note that for MTC, the assumptions of sufficient large $N$ and Gaussian codebooks may not hold, since the MTC messages can be very short  and therefore finite block length theory should be used instead in their characterization~\cite{5452208}. In that sense, the results shown in this paper are optimistic, but the approach is valid and the follow-up work can rely on the finite block length theory. 

The underlaying (U.x) and respective reference schemes (R.x) are differentiated as follows:
\begin{itemize}
	\item (R.1): Due to the absence of underlaying, outage in the link M1-U1 occur only due to deep fades that render $R_M$ undecodeable;
	\item (U.1): The link M1-U1 underlays B-U2. An outage can occur both due to a deep fade on M1-U1 as well as due to the interference from B's transmission to U2. Device U1 can use OIC to cancel out the interference from B-U2 and thus decode M1-U1;
	\item (R.2) - similar to (R.1), outage occurs only due to a deep fade on U2-M2 and undecodeability of $R_M$;
	\item (U.2) - the link U2-M2 underlays the link U1-B. An outage can occur both due to deep fade on U2-M2 as well as due to the interference from the transmission from U1 to B. The Base Station B can use OIC to cancel out the interference from U2-M2 and afterward decode U1-B;
\end{itemize}

The rate/power applied on the links U1-B and B-U2 are decided at the network infrastructure side; therefore the term \emph{network-assisted}.
In this particular example, we consider that the decision is made at B and communicated to the network nodes using the appropriate signaling, but the details of it are outside the scope of this work.
Therefore B needs to know the CSI of possibly all communication links. 
This information can be available in the form of instantaneous CSI or statistical/average CSI.
Here we consider two different cases regarding the CSI availability. Fig.~\ref{fig:CSISchemes}(a) denotes full channel state information (F-CSI), where B is aware of the instantaneous channel realizations for all links.
Fig.~\ref{fig:CSISchemes}(b) denotes partial CSI (P-CSI), where B is aware of the instantaneous realization of the channel for the links between B and U1 and U2, while for each of the links between Mi-Ui, Mi-Uj, B it knows the average SNRs\footnote{We note that the implementation of a signaling procedure that enables F-CSI is expensive in terms of the required signaling overhead. Nevertheless, we do consider it here as it provides a stepping stone to obtain the P-CSI underlaying procedure.}.
\begin{figure}
	\centering
		\includegraphics[width=\linewidth]{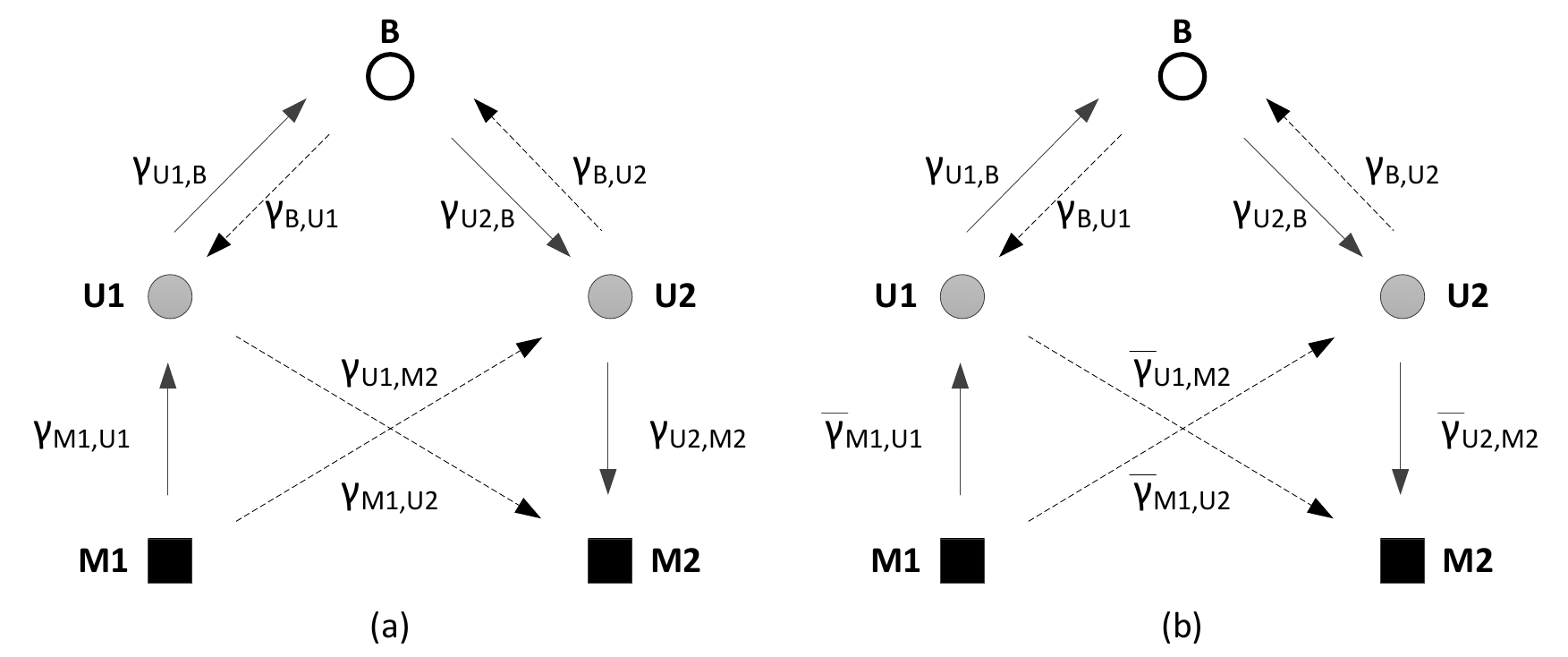}
	\caption{Types of CSI available: (a) Full-CSI and (b) Partial-CSI. The red lines denote interference links and the black ones signal links.}
	\label{fig:CSISchemes}
\end{figure}

All communication links are assumed to be non-Line-of-Sight and therefore the channel can be characterized by a block Rayleigh fading\footnote{In~\cite{5666195} the authors observed that with communications within the body area proximity, the short term fading follows a Rayleigh distribution. Another work \cite{Yazdandoost2009,6227423}, states that the more general Weibull distribution should be used instead.}.
Each of the complex channels $h_{X,Y}$ is reciprocal and Rayleigh-faded. The Signal-to-Noise Ratio (SNR) $\gamma_{X,Y}$, exponentially distributed, where the index $X$ denotes the signal transmitter, and $Y$ the receiver.
We assume for ease of analysis the block fading model, therefore the channel gains remain constant along the scheduling epoch.

\section{Characterization of the Underlayed Schemes}
\label{sec:UnderlayedSchemesCharacterization}

This section contains analysis of the two proposed underlayed schemes as well as formulation of the algorithms that adapt the rate and transmission power in the cellular links.
We first provide the preliminary analysis on the outage probability for MTC and this is followed by the analysis of the underlayed schemes (U.1) and (U.2). 
%

In Table~\ref{tab:ListOfSymbols} are listed the symbols used across this paper.
\begin{table}[t]
	\centering
			\begin{tabular}{ l  l }
		  	\hline
  			Symbol & Description\\
				\hline
				$\Gamma_x$ & Normalized Rate on direction $x$ \\
				$R_M$ & Fixed Rate for the machine-type link\\
				$R_{X,Y}$ & Rate of the link between X and Y\\
				$R_{X,Y}^\ast$ & Historical mean rate of the link between X and Y\\
				$S_M$ & Outage Margin \\
				$\gamma_{X,Y}$ & SNR from the signal from node X at receiver Y\\
				$P_X$ & Transmission power of node X\\
				$S_M$ & Outage Margin factor for the Machine-Type link\\
				$\sigma_Y^2$ & variance of the complex-valued Gaussian noise at receiver Y\\
				$h_{X,Y}$ & complex channel gain of the channel between the nodes X and Y\\
				
  			\hline
			\end{tabular}
	\caption{Definition of used symbols.}
	\label{tab:ListOfSymbols}
\end{table}

\subsection{Preliminaries}
\label{sec:Preliminaries}

Consider a time slot where only one transmission occurs, e.g. the first time slot of $R.1$ where only the communication link Mi-Ui is present.
Let, in general, X denote the transmitter and Y the receiver.
The instantaneous received signal at node $Y$, $y_{Y}$, is then given by
\begin{equation}
	y_{Y} = h_{X,Y} x_{X} + Z
\end{equation}
where $h_{X,Y}$ denotes the complex channel gain on the channel between the nodes $X$ and $Y$.
$x_{X}$ is the normalized signal transmitted by the $i$, with $E[|x_{X}|^2]=P_{X}$.
$Z$ is the complex-valued Gaussian noise with variance $\sigma_Y^2$.

The instantaneous SNR, $\gamma_{X,Y}$, is then given by,
\begin{equation}
	\gamma_{X,Y} = \frac{\left|h_{X,Y}\right|^2 x_{X}}{\sigma_Y^2}
\end{equation}
and the instantaneous achievable rate in this link is,
\begin{equation}
	R_{X,Y} = C(\gamma_{X,Y}).
	\label{rate}
\end{equation}

As stated in Section~\ref{sec:SystemModel}, the Machine-Type link has a fixed rate, hereinafter denoted as $R_M$.
Whenever the achievable instantaneous rate goes below the target rate, i.e. $R_{X,Y}<R_M$, the link is in outage.
From (\ref{rate}) the minimum SNR required to achieve $R_M$ is then given by $\gamma_{M}$.

The goal of the proposed schemes, is to perform underlaying whenever $R_{X,Y}>R_M$.
In the case of F-CSI, since B is aware of all instantaneous channel realizations, then the underlay communications can occur in a opportunistic way.
In this case the outage probability in the machine type links is upper bounded by the prescribed value. 
On the other hand, in the case of P-CSI, the underlaying can only take place if there is an outage margin factor, $S_M$, accounted for in the dimensioning of the Machine-Type link.
The following proposition can be stated:
\begin{lem}\label{lem:OutageMargin}
	The link X-Y attains an outage rate $R_M^{Out}$ and probability $P_{Out}$, in presence of $N$ interferers iff
	\begin{equation}
		P_X \geq S_M \cdot \frac{\sigma_Y^2}{\left|h_{X,Y}\right|^2} \cdot \frac{2^{\frac{R_M^{Out}}{1 - P_{Out}}}-1}{\ln \left( 1 - P_{Out} \right)}
	\end{equation}
	when $S_M = 1 + \frac{1}{\sigma_Y^2} \sum_k^N P_k \left| h_{k,Y} \right|^2$
\end{lem}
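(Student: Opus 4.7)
The plan is to unpack the two unusual factors in the RHS: the exponent $R_M^{Out}/(1-P_{Out})$ should come from the usual trick of inflating the transmission rate so that the long-term throughput equals $R_M^{Out}$ despite outage episodes of probability $P_{Out}$, while the $\ln(1-P_{Out})$ factor should come from inverting the CDF of an exponential random variable, which is natural because the direct link $|h_{X,Y}|^2$ is block Rayleigh faded as stipulated in the System Model. I would interpret $|h_{X,Y}|^2$ in the statement as the mean gain $E[|h_{X,Y}|^2]$, consistent with the P-CSI setting in which only average SNRs of the MTD links are known; this also resolves the apparent sign issue, since $\ln(1-P_{Out})<0$ and so the stated bound really requires $-\ln(1-P_{Out})$ in the denominator.

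First I would write down the SINR at $Y$ treating the $N$ interferers as noise,
\begin{equation}
\gamma_{X,Y}=\frac{P_X\,|h_{X,Y}|^2}{\sigma_Y^2+\sum_{k=1}^{N} P_k\,|h_{k,Y}|^2}=\frac{P_X\,|h_{X,Y}|^2}{S_M\,\sigma_Y^2},
\end{equation}
where the second equality simply absorbs the interference-plus-noise term into the $S_M$ defined in the statement. From \eqref{rate} the no-outage event $R_{X,Y}\ge R$ is equivalent to $|h_{X,Y}|^2\ge (2^R-1)\,S_M\,\sigma_Y^2/P_X$.

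Next I would choose the transmission rate as $R=R_M^{Out}/(1-P_{Out})$; since an outage wastes a fraction $P_{Out}$ of the transmissions, this is the standard inflated rate that makes the delivered throughput equal to the target $R_M^{Out}$. Invoking the Rayleigh fading assumption, $|h_{X,Y}|^2$ is exponentially distributed, so its CDF gives
\begin{equation}
\Pr\!\bigl[\,R_{X,Y}<R\,\bigr]=1-\exp\!\Bigl(-\tfrac{(2^R-1)\,S_M\,\sigma_Y^2}{P_X\,\overline{|h_{X,Y}|^2}}\Bigr).
\end{equation}
Imposing that this be at most $P_{Out}$, taking logarithms, and solving for $P_X$ yields the claimed inequality, since $-\ln(1-P_{Out})>0$. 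Equivalence (the ``iff'') follows because every step is a monotone rearrangement.

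The main obstacle is really notational rather than mathematical: deciding how to treat $|h_{X,Y}|^2$ (random vs.\ mean) and repairing the sign of the $\ln(1-P_{Out})$ factor so that $P_X$ is nonnegative. Once those are pinned down, the proof is a direct computation: plug the $S_M$ aggregation into the SINR, substitute the inflated rate, and invert the exponential CDF.
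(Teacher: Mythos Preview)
Your proposal is correct and essentially matches the paper's proof: both invert the exponential CDF of the Rayleigh-faded direct link, identify the inflated rate $R=R_M^{Out}/(1-P_{Out})$ as the source of the exponent, and absorb the interferers into the deterministic factor $S_M$. The only cosmetic difference is ordering---the paper first derives the interference-free threshold power $P_X^{\mathrm{Min}}$ and then, in a separate step, argues that adding interferers forces $P_X\ge S_M\,P_X^{\mathrm{Min}}$, whereas you fold $S_M$ into the SINR from the outset and solve once; your observations about the sign of $\ln(1-P_{Out})$ and the mean-gain interpretation of $|h_{X,Y}|^2$ are exactly the tacit conventions the paper uses.
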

\begin{proof}
	The proof is found on Appendix~\ref{sec:ProofOfLemmaRef}.
\end{proof}

This paves the way for the underlayed schemes that we characterize in the following.

\subsection{Characterization of (U.1)}
\label{sec:CharacterizationOfIIii}

Here we characterize the underlayed scheme (U.1).
Consider the instantaneous received signal at Ui, $y_{Ui}$, which is given by,
\begin{equation}
	y_{Ui} = h_{Mi,Ui} x_{Mi} + h_{B,Ui} x_{B} + Z,
	\label{y_u1_s.2}
\end{equation}
and the instantaneous received signal, $y_{Uj}$, at Uj is now given by,
\begin{equation}
	y_{Uj} = h_{Mi,Uj} x_{M1} + h_{B,Uj} x_{B} + Z,
	\label{y_u2_s.2}
\end{equation}
where $h_{B,Ui}$, $h_{B,Uj}$, $h_{Mi,Uj}$ and $h_{Mi,Ui}$ denote the complex channel gains on the channel between the respective nodes identification in the indices.
$x_{B}$ and $ x_{Mi}$ are respectively the signal transmitted by the $B$ and $Mi$, such that $E[|x_{B}|^2] \leq P_B$ and $E[|x_{Mi}|^2] = P_{M}$.

From (\ref{y_u1_s.2}) and (\ref{y_u2_s.2}) follows that if the transmissions of both B and Mi are at the receiver, then the information-theoretic setting of the Gaussian Multiple Access Channel~\cite{gamal2011network} can be applied.
Since the distance from B to Ui and Uj is not necessarily equal and the same can be said about the distance from Mi to Ui and Uj, then synchronization of both B and Mi transmissions cannot be assured at both Ui and Uj.
Therefore, the multiple-access channel setting and consequentially OIC can only be applied at one of the receivers, which here we assume to be Ui in (U.1).
The required synchronization at Ui, can be accomplished by either B or Ui giving an appropriate timing advance to Mi.
While for the underlaying scheme (U.2), the synchronization occurs instead at B, since Mi cannot perform OIC.
Figure~\ref{fig:RateRegion_i_iii} depicts the signal and interference links of the underlaying scheme (U.1) and the multiple-access channel rate region at Ui.
\begin{figure}
	\centering
		\includegraphics[width=\linewidth]{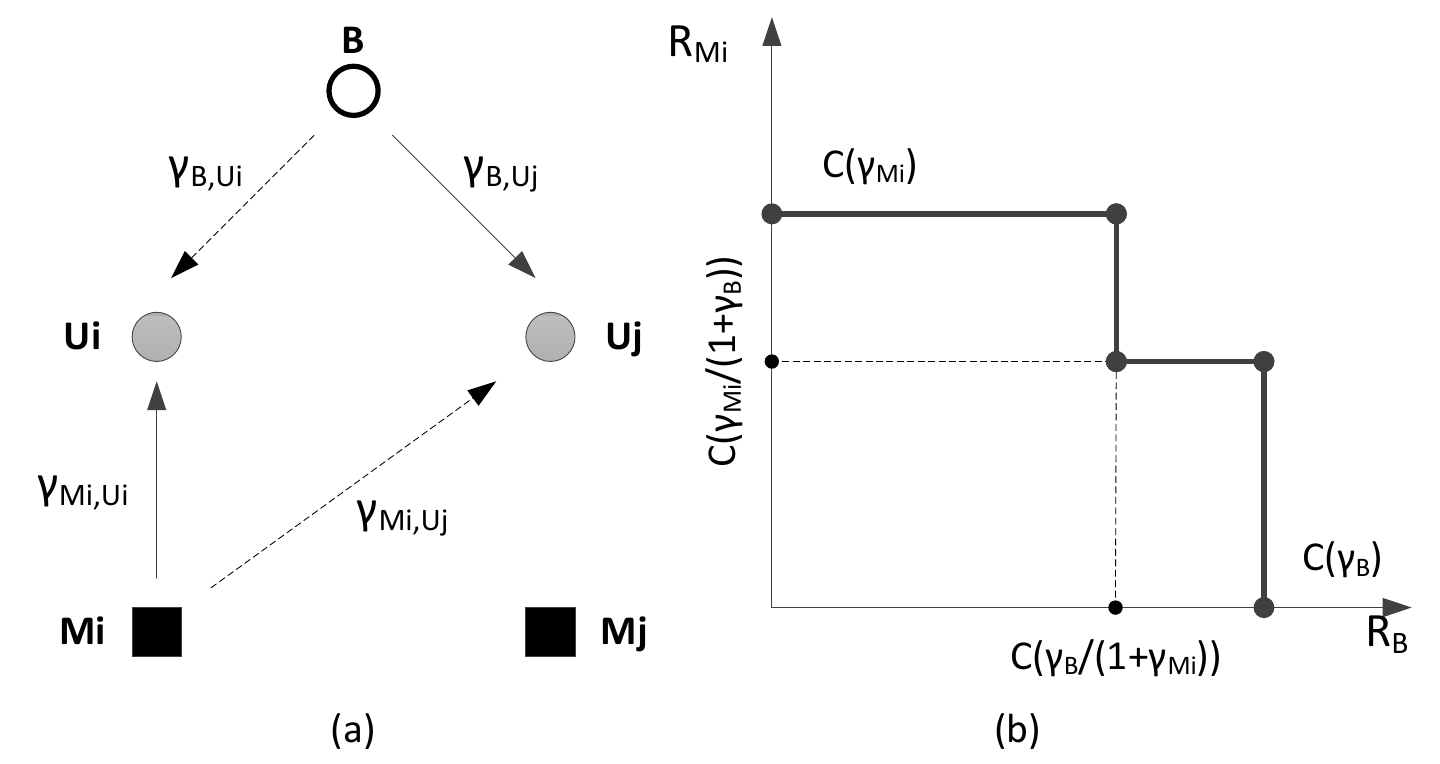}
	\caption{(a)Interference Links; (b) Rate region at Ui.}
	\label{fig:RateRegion_i_iii}
\end{figure}

At Uj the intended signal is the one coming from B, as shown in Figure~\ref{fig:RateRegion_i_iii}(a), while the one coming from Mi is treated as interference.
The maximum achievable instantaneous rate by B that is decodable at Uj, $R_{B,Uj}$, is then given by,
\begin{equation}
	R_{B,Uj} = \C{\frac{\gamma_{B,Uj}}{1 + \gamma_{Mi,Uj}}} \approx \C{\gamma_{B,Uj}}
\end{equation}
where this approximation can be made with negligible error since $P_B >> P_M$ and $h_{Mi,Uj}\approx h_{B,Uj}$, which leads to $\gamma_{B,Uj}>>\gamma_{Mi,Uj}$\footnote{We note that strictly, $R_{B,Uj}$ should be computed considering the interference contributed by Mi, and in the case of P-CSI this would lead to the presence of outage and therefore this rate should be computed following (\ref{outageRate}).}.

Figure~\ref{fig:RateRegion_i_iii}(b) shows the multiple-access channel rate region at the Ui receiver.
This capacity region is not convex~\cite{5714150}, as neither the B or Mi are able to do time-sharing, superposition coding or any other orthogonal resource sharing scheme, while the receiver does not use joint decoding and can only decode/cancel single-user codebooks. 
The shape of this capacity region is then delimited in regards to $R_{Mi}$ as follows:
\begin{itemize}
	\item If $0 < R_{Mi} \leq \C{\SNIR{\gamma_{Mi,Ui}}{\gamma_{B,Ui}}}$ - the signal from the Mi can be decoded even in the presence of interference from the signal transmitted from B, therefore, there is not limit set on $R_B$;
	\item If $\C{\SNIR{\gamma_{Mi,Ui}}{\gamma_{B,Ui}}} < R_{Mi} \leq \C{\gamma_{Mi,Ui}}$ - the signal from Mi can only be decoded if $R_B \leq \C{\SNIR{\gamma_{B,Ui}}{\gamma_{Mi,Ui}}}$.
	\item If $R_{M1} > \C{\gamma_{Mi,Ui}}$ - the signal from Mi cannot be decoded.
\end{itemize}

As stated previously, the goal of the proposed underlaying schemes is to maximize the rate of the cellular links while meeting the outage requirements of the machine type link.
This rate maximization procedure is dependent on the type of CSI available, therefore in the following we characterize this procedure for F-CSI and P-CSI.

\subsubsection{F-CSI}
\label{sec:FCSI_i_iii}

In the case of F-CSI, B, is aware of all instantaneous channel realizations at the network, as depicted in Figure~\ref{fig:CSISchemes}.
Therefore, B can always select the appropriate $R_B$ and $P_B$ so that the conditions given in Lemma~\ref{lem:OutageMargin} are respected.
Furthermore, we note that since Ui can perform OIC, then even in the case where there is no outage margin ($S_M = 1$), B can select the right combination of $R_B$ and $P_B$, so that the signal from B is decoded and cancelled, allowing the signal from Mi to be correctly decoded.
Therefore the outage probability in the link Mi-Ui, $P_{Out}$, is the same for which the Mi-Ui link was dimensioned for, and given as,
\begin{align}
	P\left[\C{\SNIR{\gamma_{Mi,Ui}}{\gamma_{B,Mi}}} < R_{M}\right] \leq P\left[\gamma_{Mi,Ui} < \gamma_{M}\right] = P_{Out}.
	\label{eq:PoutRequirement}
\end{align}
We note, that outage in the link Mi-Ui can still occur due to the Mi-Ui link conditions, but not due to interference from the B transmission.
In the case of P-CSI, B is not aware of the instantaneous channel realization of the Mi-Ui link, therefore the selection of the pair $R_B$ and $P_B$ is done instead based on the channel statistics, as shown in sub-section~\ref{sec:PCSI_i_iii}. 

Therefore, the maximum achievable instantaneous rate Mi, $R_{Mi}$, decodable at Ui, is given by,
\begin{equation}
		R_{Mi} \leq \begin{cases}
        \C{\gamma_{Mi,Ui}} 											 & R_{B,Ui} \leq \C{\SNIR{\gamma_{B,Ui}}{\gamma_{Mi,Ui}}}\\
        \C{\SNIR{\gamma_{Mi,Ui}}{\gamma_{B,Ui}}} & R_{B,Ui} >    \C{\SNIR{\gamma_{B,Ui}}{\gamma_{Mi,Ui}}}
		\end{cases},
		\label{eq:RM_i_iii}
\end{equation}
which leads to Lemma~\ref{lem:LimitsonRB}.
Where it is formalized the conditions that limit the maximum rate achievable by B, $R_B$.
\begin{lem}\label{lem:LimitsonRB}
	The maximum rate achievable by B, $R_B$, is limited by the channel conditions at Ui and $R_M$ iff
	\begin{equation}
		P_B \geq \left( \frac{P_M \left| h_{Mi,Ui} \right|^2}{2^{R_M}-1} - \sigma_U^2\right) \cdot \frac{1}{\left|h_{B,Ui}\right|^2}
		\label{eq:pb_limit_conditions}
	\end{equation}
\end{lem}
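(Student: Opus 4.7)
The plan is to trace when the two cases in (\ref{eq:RM_i_iii}) separate. In the first branch, any $R_B$ is compatible with Ui decoding Mi at rate up to $\C{\gamma_{Mi,Ui}}$, so B is unconstrained; only in the second branch is $R_B$ forced to satisfy $R_B \leq \C{\SNIR{\gamma_{B,Ui}}{\gamma_{Mi,Ui}}}$, which is exactly what it means for the Ui channel conditions and $R_M$ to limit $R_B$. So the statement ``$R_B$ is limited'' is equivalent to the MTC rate $R_M$ falling outside the treat-as-noise region at Ui, i.e.\ $R_M > \C{\SNIR{\gamma_{Mi,Ui}}{\gamma_{B,Ui}}}$.

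The second step is to turn this branch condition into a lower bound on $P_B$. Inverting $C(\cdot)$ gives $(2^{R_M}-1)(1+\gamma_{B,Ui}) > \gamma_{Mi,Ui}$, which rearranges to $\gamma_{B,Ui} > \gamma_{Mi,Ui}/(2^{R_M}-1) - 1$. Substituting $\gamma_{B,Ui} = P_B |h_{B,Ui}|^2/\sigma_U^2$ and $\gamma_{Mi,Ui} = P_M |h_{Mi,Ui}|^2/\sigma_U^2$, multiplying by $\sigma_U^2$ and dividing by $|h_{B,Ui}|^2$ isolates $P_B$ and reproduces exactly the bound in the lemma, with the boundary case turning ``$>$'' into the ``$\geq$'' that is claimed.

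There is no real obstacle here beyond routine algebra; the only point deserving explicit attention is that the lemma is stated as an iff, so I would emphasize that each step above is reversible — the SNR substitutions are equalities and $C(\cdot)$ is strictly increasing — so the power threshold on $P_B$ and the branch-activation condition $R_M > \C{\SNIR{\gamma_{Mi,Ui}}{\gamma_{B,Ui}}}$ are logically equivalent. A brief sanity check at the corners is also worth including: when $P_M|h_{Mi,Ui}|^2 \leq (2^{R_M}-1)\sigma_U^2$ the right-hand side is non-positive, meaning any $P_B\geq 0$ already ``limits'' $R_B$, which is consistent with the fact that Mi's signal is then undecodable without OIC regardless of B's activity.
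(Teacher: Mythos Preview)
Your proposal is correct and follows essentially the same approach as the paper: both start from the second branch of (\ref{eq:RM_i_iii}), recognize that ``$R_B$ is limited at Ui'' is equivalent to $R_M \geq \C{\SNIR{\gamma_{Mi,Ui}}{\gamma_{B,Ui}}}$, and then invert $C(\cdot)$ and substitute the SNR definitions to isolate $P_B$. Your write-up is actually more complete than the paper's, which simply says the inequality ``can be manipulated'' to yield (\ref{eq:pb_limit_conditions}); your explicit reversibility remark for the iff and the corner-case sanity check are welcome additions.
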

\begin{proof}
	Taking the second case of (\ref{eq:RM_i_iii}), we can write
	\begin{equation}
		R_M \geq \C{\SNIR{\gamma_{Mi,Ui}}{\gamma_{B,Ui}}} = \log_2\left(1 + \frac{P_M \left|h_{Mi,Ui}\right|^2}{\sigma_U^2 + P_B \left| h_{B,Ui} \right|^2}\right)
	\end{equation}
	which can be manipulated to held (\ref{eq:pb_limit_conditions}), proving the Lemma~\ref{lem:LimitsonRB} claim.
	We note that when (\ref{eq:pb_limit_conditions}) does not hold, then $R_B$ is not limited by the channel conditions at Ui and neither by $R_M$.
\end{proof}

The maximum achievable instantaneous rate for B, $R_B$, according with the shape of multiple access channel capacity region is given by (\ref{eq:RB_i_iii}).

\begin{figure*}[!t] 
\normalsize
\begin{equation}
\label{eq:RB_i_iii}
R_{B} \leq \\
		\begin{cases}
		0 & R_M > \C{\gamma_{Mi,Ui}} \\ 
		
		\min \left[\C{\SNIR{\gamma_{B,Ui}}{\gamma_{Mi,Ui}}}, \C{\gamma_{B,Uj}}\right] & \C{\SNIR{\gamma_{Mi,Ui}}{\gamma_{B,Ui}}} \leq R_M \leq \C{\gamma_{Mi,Ui}} \\
		
		\C{\gamma_{B,Uj}} & R_M < \C{\SNIR{\gamma_{Mi,Ui}}{\gamma_{B,Ui}}} 
		
		\end{cases}\end{equation}
\hrulefill
\vspace*{4pt}
\end{figure*}

The first case of (\ref{eq:RB_i_iii}) is set by design, since in a network composed by multiple Ui-Mi pairs, this particular underlaying would not occur and instead an alternative Ui-Mi pair would be chosen.

From the second and third case of (\ref{eq:RB_i_iii}) and assuming that B has a transmission power upper bound given by $P_B^{Max}$, then Theorem~\ref{thm:MaximumRB} can be stated as follows.
\begin{thm}\label{thm:MaximumRB}
	The maximum rate that $B$ can take, $R_B$, is given by
	\begin{equation}
		R_B = \C{\gamma_{B,Uj}^{(1)}}
		\label{ep:RB_first_claim}
	\end{equation}
	when
	\begin{equation}
		P_B^{Max} < \left( \frac{P_M \left| h_{Mi,Ui} \right|}{2^{R_M}-1} - \sigma_U^2\right) \cdot \frac{1}{\left|h_{B,Ui}\right|^2}
		\label{ep:PB_first_claim}
	\end{equation}
	otherwise
	\begin{equation}
		R_B = \max\left[ \C{\gamma_{B,Uj}^{(2)}}, \min\left( \C{\frac{\gamma_{B,Ui}^{(1)}}{1 + \gamma_{Mi,Ui}}}, \C{\gamma_{B,Ui}^{(1)}} \right) \right]
		\label{ep:RB_second_claim}
	\end{equation}
	where
	\begin{equation}
		\gamma_{B,Ui}^{(1)} \leftarrow P_B^{(1)} = P_B^{Max}
	\end{equation}
	and
	\begin{equation}
		\gamma_{B,Ui}^{(2)} \leftarrow P_B^{(2)} = \left( \frac{P_M \left| h_{Mi,Ui} \right|}{2^{R_M}-1} - \sigma_U^2\right) \cdot \frac{1}{\left|h_{B,Ui}\right|^2}
	\end{equation}
\end{thm}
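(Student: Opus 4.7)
The plan is to combine Lemma~\ref{lem:LimitsonRB} with the three-case rate bound in (\ref{eq:RB_i_iii}) and do a case analysis on $P_B^{\mathrm{Max}}$. The critical quantity is the threshold power $P_B^{(2)} = \bigl(P_M|h_{Mi,Ui}|^2/(2^{R_M}-1) - \sigma_U^2\bigr)/|h_{B,Ui}|^2$ singled out by Lemma~\ref{lem:LimitsonRB}. Below this threshold, Mi's codeword is decodable at Ui while treating B's transmission as noise, placing us in the third case of (\ref{eq:RB_i_iii}) where $R_B$ is limited only by the signal-to-noise ratio at Uj. Above this threshold, successful MTC decoding requires OIC, pushing us into the second case of (\ref{eq:RB_i_iii}), where $R_B$ is constrained by $\min\bigl[C(\gamma_{B,Ui}/(1+\gamma_{Mi,Ui})), C(\gamma_{B,Uj})\bigr]$.

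First I would handle the case $P_B^{\mathrm{Max}} < P_B^{(2)}$. Every admissible power keeps B in the interference-as-noise regime, and since $C(\gamma_{B,Uj})$ is strictly increasing in $P_B$, the optimum is to transmit at $P_B^{(1)} = P_B^{\mathrm{Max}}$, giving $R_B = C(\gamma_{B,Uj}^{(1)})$ as in (\ref{ep:RB_first_claim}).

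Next I would handle $P_B^{\mathrm{Max}} \geq P_B^{(2)}$ by comparing two candidate operating points. The first candidate is the boundary power $P_B^{(2)}$: staying in the interference-as-noise regime, any lower power is suboptimal by monotonicity of $C(\gamma_{B,Uj})$, and the largest admissible power that keeps us in this regime is exactly $P_B^{(2)}$, yielding $R_B = C(\gamma_{B,Uj}^{(2)})$. The second candidate lies in the OIC regime: here both $C(\gamma_{B,Ui}/(1+\gamma_{Mi,Ui}))$ and $C(\gamma_{B,Uj})$ are non-decreasing in $P_B$, so the maximum over this regime is attained at $P_B = P_B^{\mathrm{Max}}$, yielding the $\min$ term in (\ref{ep:RB_second_claim}). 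Taking the larger of the two candidates gives the claimed expression.

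The main, and really only, obstacle is justifying that no power choice outside these two candidates can do better; this boils down to the monotonicity argument within each sub-regime and the observation that the two regimes meet precisely at $P_B^{(2)}$, so the maximizer must lie either at the internal boundary of the first regime or at the upper endpoint of the second. A minor point to acknowledge is the earlier approximation that drops the Mi interference when evaluating $R_{B,Uj}$; since $P_B \gg P_M$, this preserves both monotonicity and the functional form of $C(\gamma_{B,Uj})$, so the optimization structure is unchanged.
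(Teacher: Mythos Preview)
Your proposal is correct and follows essentially the same line as the paper's own proof: both use Lemma~\ref{lem:LimitsonRB} to recast the rate bound (\ref{eq:RB_i_iii}) as a two-regime expression governed by the threshold power $P_B^{\ast}=P_B^{(2)}$, and then perform the same case split on $P_B^{\mathrm{Max}}$ relative to that threshold. If anything, your monotonicity argument for why only the boundary point $P_B^{(2)}$ and the endpoint $P_B^{\mathrm{Max}}$ need be compared in the second case is more explicit than what the paper provides, which simply states that the maximizing pair is selected from the rewritten expression without spelling out that justification.
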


\begin{proof}
	
	The claims given by (\ref{ep:RB_first_claim}) and (\ref{ep:PB_first_claim}) are proved directly using Lemma~\ref{lem:LimitsonRB}, as it follows that if,
	\begin{equation}
		P_B^{Max} < \left( \frac{P_M \left| h_{Mi,Ui} \right|}{2^{R_M}-1} - \sigma_U^2\right) \cdot \frac{1}{\left|h_{B,Ui}\right|^2}
	\end{equation}
	then $R_B$ is not dependent on the channel conditions, then $R_B$ is maximized if $P_B = P_B^{Max}$ and $R_B$ is given by (\ref{ep:RB_first_claim}).
	
	The claim given by (\ref{ep:RB_second_claim}) can be proved by considering that (\ref{eq:RB_i_iii}) can be manipulated, using the result from Lemma~\ref{lem:LimitsonRB}, to held,
	\begin{equation}
			R_{B} \leq \begin{cases}
		
			\min \left[\C{\SNIR{\gamma_{B,Ui}}{\gamma_{Mi,Ui}}}, \C{\gamma_{B,Uj}}\right] & P_B \geq P_B^{\ast} \\
			\C{\gamma_{B,Uj}} & P_B < P_B^{\ast}
			
			\end{cases},
			\label{eq:RB_i_iii_proof}
	\end{equation}
	from which the pair $P_B^{(m)}$ and $R_B^{(m)}$ is selected such that $R_B$ is maximized and where $P_B^{\ast}$, is given by,
	\begin{equation}
		P_B^{\ast} = \left( \frac{P_M \left|h_{Mi,Ui}\right|^2}{2^{R_M}-1} - \sigma_U^2 \right) \frac{1}{\left| h_{B,Ui} \right|^2}
	\end{equation}
\end{proof}

The procedure to obtain then maximum rate, $R_B$, is formalized in Algorithm~\ref{alg:MaximumRBComputationFCSI} according with the guidelines provided by Theorem~\ref{thm:MaximumRB}.
\begin{algorithm}[t]
 \DontPrintSemicolon
 \SetAlgoLined
 \LinesNumbered

	\textbf{Input Parameters:}
	$P_B^{Max}$, $P_M$, $R_M$, $\left|h_{Mi,Ui}\right|^2$, $\left|h_{B,Ui}\right|^2$, $\left|h_{B,Uj}\right|^2$\;
	
	\textbf{Ouput Parameters:}
	$P_B$ and $R_B$\;
	
	\textbf{Algorithm:}\;
	$P_B^{(1)} = P_B^{Max} \rightarrow \gamma_{B,Ui}^{(1)}, \gamma_{B,Uj}^{(1)}$\;
	$R_B^{(1)} = \max \left[\C{\SNIR{\gamma_{B,Uj}^{(1)}}{\gamma_{Mi,Ui}}}, \C{\gamma_{B,Uj}^{(1)}}\right]$\;
	
	$P_B^{(2)} = \min\left[ \left(\frac{P_M\left|h_{Mi,Ui}\right|^2}{2^{R_M}-1}-\sigma_U^2\right)\frac{1}{\left|h_{B,Ui}\right|^2}, P_B^{Max}\right] \rightarrow \gamma_{B,Uj}^{(2)}$\;
	$R_B^{(2)} = \C{\gamma_{B,Uj}^{(2)}}$\;
	
	\eIf{$R_B^{(1)}\geq R_B^{(2)}$}{
		$P_B = P_B^{(1)}$\;
		$R_B = R_B^{(1)}$\;
	}{
		$P_B = P_B^{(2)}$\;
		$R_B = R_B^{(2)}$\;
	}
	
	\caption{Procedure to compute $P_B$ and $R_B$ in the case of F-CSI.}
\label{alg:MaximumRBComputationFCSI}
\end{algorithm}

\subsubsection{PCSI}
\label{sec:PCSI_i_iii}

In the case of P-CSI, as discussed in Section~\ref{sec:SystemModel}, B is aware of the instantaneous channel realizations in the cellular links, but is only aware of the mean of the channel realization $\left|h_{Mi,Ui}\right|^2$.
From Theorem~\ref{thm:MaximumRB} follows that $R_B$ and $P_B$ needs to be chosen such that the target outage probability, $P_{Out}$, in the link Mi-Ui is met, i.e. such that (\ref{eq:PoutRequirement}) is valid.

The outage probability given by the left term of (\ref{eq:PoutRequirement}) can only be evaluated numerically.
For this purpose the methodology exposed in~\cite{5714150} can be applied.
Therefore, the maximum $R_B$ and $P_B$ can only be found through an exaustive search\footnote{We note that more computational efficient search methods could be applied, such as bi-dimensional bisection search~\cite{Shellman2002641} among other random search methods.} of all possible $R_B$ and $P_B$ pairs, with some quantization error.
In Algorithm~\ref{alg:MaximumRBComputationPCSI}, is shown a sub-optimal search approach on the domain of $R_B$ based on the bisection method, where it is assumed that $P_B = P_{B}^{Max}$.
\begin{algorithm}[t]
 \DontPrintSemicolon
 \SetAlgoLined
 \LinesNumbered

	\textbf{Input Parameters:}
	$P_B^{Max}$, $\left|h_{B,Uj}\right|^2$, $\left|h_{B,Ui}\right|^2$, $E\left[\left|h_{Mi,Ui}\right|^2\right]$, $\epsilon$\;
	
	\textbf{Ouput Parameters:}
	$R_B$\;
	
	\textbf{Algorithm:}\;
	
  $R_B^{Min} = 0$\;
	$R_B^{Max} = 0$\;
	
	$R_B^{Current} = \frac{\left( R_B^{Min} + R_B^{Max} \right)}{2}$\;
	
	\While{$\left| P_{Out} - P_{Out}^{*}\left( R_B^{Current} \right) \right| > \epsilon$}{
		
			$R_B^{Current} = \frac{\left( R_B^{Min} + R_B^{Max} \right)}{2}$\;
			
			\eIf{$P_{Out}^{*}\left( R_B^{Current} \right) > P_{Out}$}{
				$R_B^{Max} = R_B^{Current}$\;
			}{	
				$R_B^{Min} = R_B^{Current}$\;			
			}
	}
	
	$R_B =  R_B^{Current}$\;
	
	\caption{Procedure to compute $R_B$ in the case of P-CSI, given an $P_{Out}$, and a stopping criterium $\epsilon$ and where $P_{Out}^{*}\left(R_{B}^{Current},\right)$ is computed following the method exposed in~\cite{5714150}.}
\label{alg:MaximumRBComputationPCSI}
\end{algorithm}

\subsection{Characterization of (U.2)}
\label{sec:CharacterizationOfIiIv}

In Figure~\ref{fig:RateRegion_ii_iv}(a) is depicted the signal and interference links in the case where the link Ui-Mi underlays the link Uj-B.
\begin{figure}
	\centering
		\includegraphics[width=\linewidth]{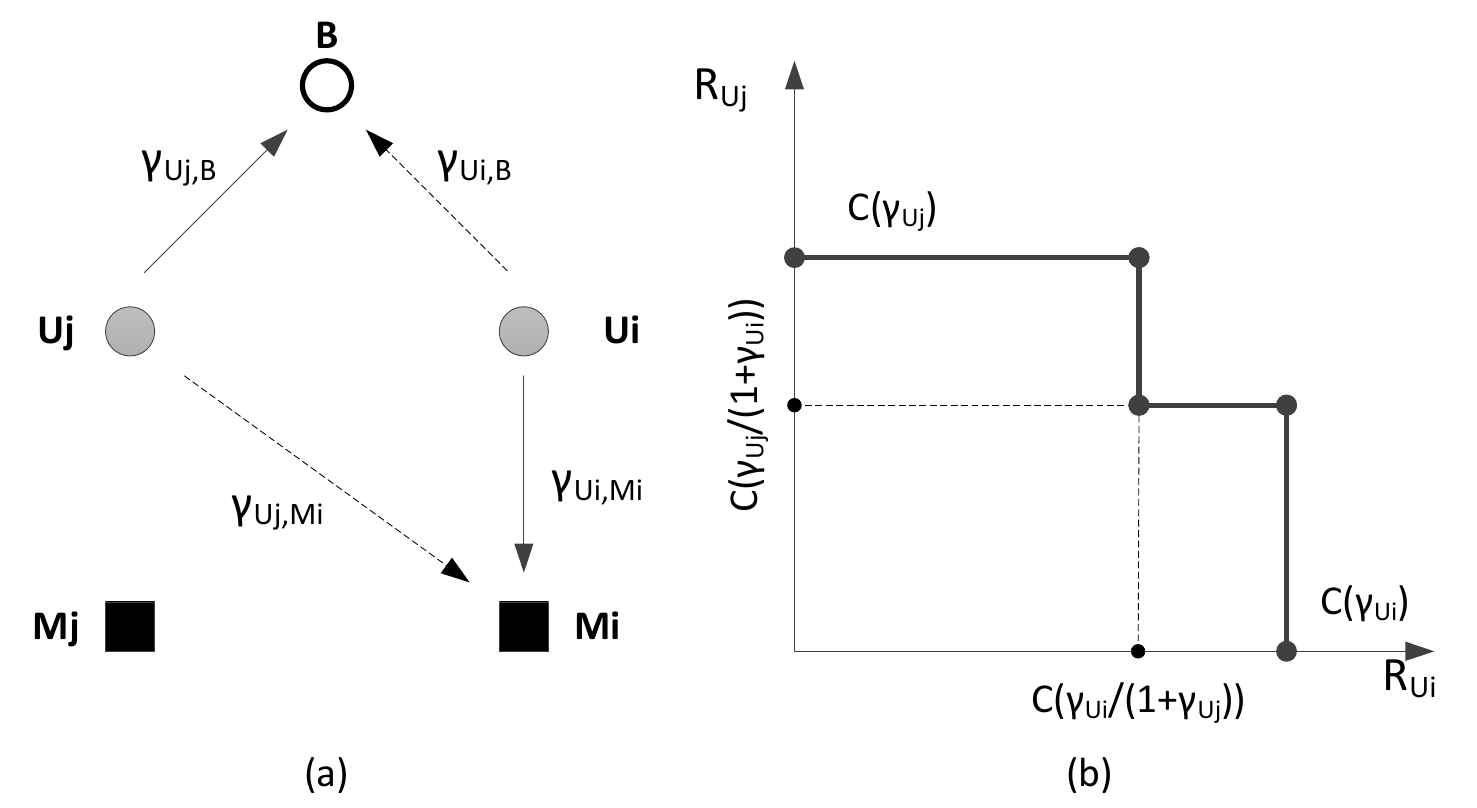}
	\caption{(a) Signal and Interference Links; (b) Rate Region at B.}
	\label{fig:RateRegion_ii_iv}
\end{figure}

The instantaneous received signal, $y_{B}$, at B is given by,
\begin{equation}
	y_{B} = h_{Uj,B} x_{Uj} + h_{Ui,B} x_{Ui} + Z,
	\label{y_b_s.3}
\end{equation}
and the instantaneous received signal, $y_{Mi}$, at Mi is given by,
\begin{equation}
	y_{Mi} = h_{Uj,Mi} x_{Uj} + h_{Ui,Mi} x_{U2} + Z,
	\label{y_m_s.3}
\end{equation}
where $h_{Uj,B}$, $h_{Ui,B}$, $h_{Uj,Mi}$ and $h_{Ui,Mi}$ denote the complex channel gain of the channels between the respective nodes identification in the indices.
$x_{Uj}$ and $x_{Ui}$ are respectively the signal transmitted by the Uj and Ui, such that $E[|x_{Uj}|^2] \leq P_U^{Max}$ and $E[|x_{Ui}|^2] \leq P_U^{Max}$.

As in the previous sub-section, in the following we characterize the underlaying scheme (U.2) in F-CSI and P-CSI conditions.

\subsubsection{F-CSI}
\label{sec:FCSI_ii_iv}

At Mi the signal from Ui is decoded in the presence of interference from Uj, since as stated in Section~\ref{sec:SystemModel} the MTDs are not able to perform OIC.
Therefore, $R_{Ui,Mi}$ is only affected by $P_{Uj}$ and the maximum instantaneous rate from Ui that can be decoded at Mi, $R_{Ui}$, is given by,
\begin{equation}
	R_{Ui,Mi} = \C{\SNIR{\gamma_{Ui,Mi}}{\gamma_{Uj,Mi}}}
	\label{eq:ru2m2}
\end{equation}
This leads to the following Lemma:
\begin{lem}\label{lem:MaximumPU1_MaxPU2}
		If $P_{Ui} = P_U^{Max}$ then the allowed $P_{Uj}$ is maximized.
\end{lem}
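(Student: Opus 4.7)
The plan is to reduce the claim to a straightforward monotonicity argument on the rate constraint at Mi. Since Mi cannot perform OIC, the link Ui--Mi must deliver its fixed rate $R_M$ while treating the transmission from Uj as noise, and the transmission power $P_{Uj}$ enters the SINR denominator linearly, so it acts as the only ``pressure'' against which $P_{Ui}$ must compete.

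First, I would start from (\ref{eq:ru2m2}) and impose the rate requirement $R_{Ui,Mi}\geq R_M$. Substituting the definitions of the SNRs, this becomes $P_{Ui}|h_{Ui,Mi}|^2 \geq (2^{R_M}-1)\bigl(\sigma_{Mi}^2 + P_{Uj}|h_{Uj,Mi}|^2\bigr)$, which can be solved explicitly for the maximum feasible interference power as $P_{Uj} \leq \frac{1}{|h_{Uj,Mi}|^2}\bigl(\tfrac{P_{Ui}|h_{Ui,Mi}|^2}{2^{R_M}-1}-\sigma_{Mi}^2\bigr)$. The right-hand side of this bound is strictly increasing in $P_{Ui}$, so the feasible set of $P_{Uj}$ enlarges monotonically with $P_{Ui}$. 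Combined with the device constraint $P_{Ui}\leq P_U^{Max}$, the supremum of the allowed $P_{Uj}$ is therefore attained at $P_{Ui}=P_U^{Max}$, which is precisely the claim.

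Finally, I would verify that no competing constraint tightens as $P_{Ui}$ grows. At B the signal from Ui is interference that may be cancelled via OIC, but this is a rate-region question affecting the achievable $R_{Uj,B}$ (the objective), not the feasible set for $P_{Uj}$; hence it cannot reverse the monotonicity just established. In the P-CSI variant the same argument applies after invoking Lemma~\ref{lem:OutageMargin}, since the outage margin $S_M$ enters multiplicatively on the noise-plus-interference term and preserves linearity in $P_{Ui}$. The only subtlety requiring care is that the upper bound on $P_{Uj}$ must be non-negative, i.e.\ $P_{Ui}$ must already be large enough for underlaying to be feasible at all; otherwise the statement is vacuously true. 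I do not expect the monotonicity step itself to be the hard part — the main obstacle arises only when this lemma is subsequently coupled with maximizing $R_{Uj,B}$ at B, where one must trade off the now-larger admissible $P_{Uj}$ against the stronger residual interference from Ui that B must cancel.
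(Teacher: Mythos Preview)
Your proposal is correct and follows essentially the same approach as the paper: both start from (\ref{eq:ru2m2}), impose the rate constraint $R_{Ui,Mi}\geq R_M$, solve for the upper bound $P_{Uj}\leq \bigl(\tfrac{P_{Ui}|h_{Ui,Mi}|^2}{2^{R_M}-1}-\sigma_M^2\bigr)/|h_{Uj,Mi}|^2$, and observe that this bound is increasing in $P_{Ui}$, hence maximized at $P_{Ui}=P_U^{Max}$. Your additional remarks on competing constraints at B, the P-CSI variant, and the non-negativity caveat go beyond what the paper records but do not change the argument.
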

\begin{proof}
	From (\ref{eq:ru2m2}), when $R_{Ui,Mi} = R_M$ then,
	\begin{equation}
		R_M = R_{Ui,Mi} \leq \log_2\left( 1 + \frac{P_{Ui}\left| h_{Ui,Mi} \right|^2}{\sigma_M^2 + P_{Uj} \left| h_{Uj,Mi} \right|^2} \right)
	\end{equation}
	which after manipulation helds,
	\begin{equation}
		P_{Uj} \leq P_{Uj}^{\ast}
		\label{eq:PU1_M2_Upper_Bound}
	\end{equation}
	where $P_{Uj}^{\ast}$ is given by,
	\begin{equation}
		P_{Uj}^{\ast} = \left( \frac{P_{Ui} \left| h_{Ui,Mi} \right|^2}{2^{R_{Ui}} - 1} - \sigma_M^2 \right) \frac{1}{\left|h_{Uj,Mi}\right|^2}.
	\end{equation}
	Therefore, (\ref{eq:PU1_M2_Upper_Bound}) gives an upper bound on $P_{Uj}$, which is maximized when $P_{Ui} = P_U^{Max}$, while $P_{Uj} \leq  P_U^{Max}$, which is inline with the result from Lemma~\ref{lem:OutageMargin}.
\end{proof}

From the result from Lemma~\ref{lem:MaximumPU1_MaxPU2} is then assumed\footnote{The main consequence of this assumption is that the optimality of the proposed procedure is no longer assured.} that $P_{Ui} = P_U^{Max}$, so to increase the range where $P_{Uj}$ is valid. 

From (\ref{y_b_s.3}) follows that if the transmissions of both Uj and Ui are synchronized at B, then the information-theoretic setting of the Gaussian Multiple Access Channel~\cite{gamal2011network} can be applied.
In Figure~\ref{fig:RateRegion_ii_iv}(b) is shown the multiple access channel rate region at B. 

The maximum achievable instantaneous rate for Uj, $R_{Uj}$, according with the shape of the multiple access channel rate region depicted in Figure~\ref{fig:RateRegion_ii_iv}(b), is given by,
\begin{equation}
		R_{Uj} \leq \begin{cases}
	
		\C{\gamma_{Uj,B}} & R_M \leq \C{\SNIR{\gamma_{Ui,B}}{\gamma_{Uj,B}}}  \\
		
		\C{\SNIR{\gamma_{Uj,B}}{\gamma_{Ui,B}}} & R_M > \C{\SNIR{\gamma_{Ui,B}}{\gamma_{Uj,B}}}
				
		\end{cases}.
		\label{eq:U1_ii_iv}
\end{equation}
\begin{lem}\label{lem:PU1_UpperBound}
		When $P_{Uj}$ is upper bounded as follows,
		\begin{equation}
			P_{Uj} \leq \left( \frac{P_{Ui} \left| h_{Ui,B} \right|^2}{2^{R_M} - 1} - \sigma_B^2\right) \frac{1}{\left| h_{Uj,B}\right|^2}
			\label{eq:P_U1_UpperBoundAtB}
		\end{equation}
		then, the signal from Ui can be decoded and the achievable rate by Uj given by
		\begin{equation}
			R_{Uj} \leq \C{\gamma_{Uj,B}}
			\label{eq:R_U1_Interference_Cancelled}
		\end{equation}
\end{lem}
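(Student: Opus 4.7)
The plan is to show that the condition (\ref{eq:P_U1_UpperBoundAtB}) on $P_{Uj}$ is precisely the algebraic rewriting of the first-case guard in (\ref{eq:U1_ii_iv}), namely $R_M \leq \C{\SNIR{\gamma_{Ui,B}}{\gamma_{Uj,B}}}$, which governs whether the signal from Ui can be decoded at B while treating Uj's signal as Gaussian noise. Once that inequality is verified, a standard SIC/OIC argument closes the claim: B re-encodes the recovered Ui codeword, subtracts $h_{Ui,B} x_{Ui}$ from $y_B$ in (\ref{y_b_s.3}), and is left with an effective single-user Gaussian channel whose capacity is $\C{\gamma_{Uj,B}}$, yielding (\ref{eq:R_U1_Interference_Cancelled}).

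Concretely, I would first write out the Ui-at-B rate while treating Uj as noise,
\begin{equation}
R_M \leq \log_2\!\left( 1 + \frac{P_{Ui}\left|h_{Ui,B}\right|^2}{\sigma_B^2 + P_{Uj}\left|h_{Uj,B}\right|^2} \right),
\end{equation}
exponentiate both sides, and isolate $P_{Uj}$ to recover (\ref{eq:P_U1_UpperBoundAtB}). This is the mirror image of the derivation used in Lemma~\ref{lem:LimitsonRB}, with B and Ui swapping the roles of desired vs.\ interfering transmitter. I would then invoke Lemma~\ref{lem:MaximumPU1_MaxPU2} to justify fixing $P_{Ui} = P_U^{Max}$, so that the right-hand side of (\ref{eq:P_U1_UpperBoundAtB}) is as large as possible and the bound is the tightest admissible one.

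For the second half of the statement, I would appeal to the geometry of the MAC capacity region depicted in Fig.~\ref{fig:RateRegion_ii_iv}(b): once the guard $R_{Ui,B} \geq R_M$ (in the presence of Uj as noise) is met, the operating point sits on the vertical segment of the pentagon where Ui is decoded first and cancelled, so Uj's achievable rate is exactly the single-user bound $\C{\gamma_{Uj,B}}$, with no residual contribution from Ui.

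There is no genuine obstacle here; the only subtlety worth flagging is a feasibility check, namely that the right-hand side of (\ref{eq:P_U1_UpperBoundAtB}) must be nonnegative and no larger than $P_U^{Max}$ for the regime to be usable. If the bracketed quantity is negative, no admissible $P_{Uj}$ can support OIC at B and the system is forced back to the second case of (\ref{eq:U1_ii_iv}); this boundary case should be acknowledged explicitly so that the lemma is understood as a sufficient condition on $P_{Uj}$ conditional on that feasibility.
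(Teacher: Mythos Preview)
Your proposal is correct and matches the paper's own proof: both rewrite the first-case guard of (\ref{eq:U1_ii_iv}), $R_M \leq \C{\SNIR{\gamma_{Ui,B}}{\gamma_{Uj,B}}}$, in terms of $P_{Uj}$ to obtain (\ref{eq:P_U1_UpperBoundAtB}), explicitly noting the parallel with Lemma~\ref{lem:LimitsonRB}. Your extra remarks on the SIC mechanics, the invocation of Lemma~\ref{lem:MaximumPU1_MaxPU2}, and the feasibility of the right-hand side are sound but go beyond what the paper includes; the lemma as stated is purely conditional on $P_{Uj}$, so fixing $P_{Ui}=P_U^{Max}$ is not needed here.
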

\begin{proof}
		The proof follows the same steps of the proof of Lemma~\ref{lem:LimitsonRB}.
		From the first case of (\ref{eq:U1_ii_iv}), (\ref{eq:R_U1_Interference_Cancelled}) occurs when,
		\begin{equation}
			R_M \leq \C{\SNIR{\gamma_{Ui,B}}{\gamma_{Uj,B}}} = \log_2\left(1 + \frac{P_{Ui} \left|h_{Ui,B}\right|^2}{\sigma_B^2 + P_{Uj} \left| h_{Uj,B} \right|^2}\right)
		\end{equation}
		which can be manipulated to held (\ref{eq:P_U1_UpperBoundAtB}).
\end{proof}

So, the following theorem can be stated in regards to the maximum value that $R_{Uj}$ can take:
\begin{thm}\label{thm:MaximumRU1}
	The upper bound on $R_{Uj}$ is given by,
	\begin{equation}
		R_{Uj,B} \leq \max \left[ \C{\gamma_{Uj,B}^{(1)}}, \C{\SNIR{\gamma_{Uj,B}^{(2)}}{\gamma_{Ui,B}}} \right]
	\end{equation}
	where,
	\begin{equation}
	P_{Uj}^{(1)} \leq \min\left[ P_{U}^{Max}, \left( \frac{P_{Ui} \left| h_{Ui,B} \right|^2}{2^{R_M} - 1} - \sigma_B^2\right) \frac{1}{\left| h_{Uj,B}\right|^2}, P_{Uj}^{\ast} \right]
	\end{equation}
	and
	\begin{equation}
		P_{Uj}^{(2)} = \min \left[P_{U}^{Max}, \left( \frac{P_{Ui} \left| h_{Ui,Mi} \right|^2}{2^{R_{Ui}} - 1} -\sigma_M^2 \right) \frac{1}{\left|h_{Uj,Mi}\right|^2} \right]
	\end{equation}

\end{thm}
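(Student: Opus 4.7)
The plan is to exploit the piecewise structure of the rate bound (\ref{eq:U1_ii_iv}), which partitions the feasible region into an OIC regime (where the Ui codeword is peeled off at B and Uj then sees a clean channel) and an interference regime (where Ui is treated as Gaussian noise). Since $C(\cdot)$ is strictly increasing in its argument and the post-cancellation rate $\C{\gamma_{Uj,B}}$ is monotone increasing in $P_{Uj}$, while the interference-limited rate $\C{\SNIR{\gamma_{Uj,B}}{\gamma_{Ui,B}}}$ is also monotone in $P_{Uj}$, the optimum in each regime is attained at the upper boundary of the admissible $P_{Uj}$ interval. I would therefore solve the two one-dimensional optimizations separately and then take the maximum of the two resulting rates.

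For the OIC regime, the admissible $P_{Uj}$ must respect three simultaneous constraints: the radio power cap $P_{Uj} \leq P_U^{Max}$; the cancellation condition of Lemma~\ref{lem:PU1_UpperBound}, which ensures that the Ui signal can be decoded at B in the presence of the Uj interference; and the machine-link protection bound $P_{Uj} \leq P_{Uj}^{\ast}$ derived in Lemma~\ref{lem:MaximumPU1_MaxPU2}, which keeps the Ui-Mi link at rate $R_M$ despite the Uj interference seen at Mi. The intersection of these three upper bounds is exactly the minimum written as $P_{Uj}^{(1)}$ in the theorem, and plugging the corresponding $\gamma_{Uj,B}^{(1)}$ into $\C{\gamma_{Uj,B}}$ yields the first candidate rate. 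For the interference regime the cancellation condition is dropped, leaving only the power cap and the Ui-Mi protection constraint, which gives the minimum expression for $P_{Uj}^{(2)}$; the corresponding rate is $\C{\SNIR{\gamma_{Uj,B}^{(2)}}{\gamma_{Ui,B}}}$, obtained by plugging $P_{Uj}^{(2)}$ into the second line of (\ref{eq:U1_ii_iv}).

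Taking the larger of the two candidate rates proves the theorem, provided the two regimes exhaust the possibilities; this holds because for any fixed $P_{Ui}$ the inequality $R_M \lessgtr \C{\SNIR{\gamma_{Ui,B}}{\gamma_{Uj,B}}}$ is a strict dichotomy over $P_{Uj}$. I would also briefly note, via Lemma~\ref{lem:MaximumPU1_MaxPU2}, that fixing $P_{Ui} = P_U^{Max}$ is optimal, so the outer maximization over $P_{Ui}$ collapses and the Uj optimization can be carried out in isolation.

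The main obstacle I expect is bookkeeping around the cross-dependence between $P_{Ui}$ and $P_{Uj}$: the bound $P_{Uj}^{\ast}$ is itself derived from (\ref{eq:ru2m2}) assuming $P_{Ui} = P_U^{Max}$, so the argument has to order the optimizations correctly (pin $P_{Ui}$ first, then optimize $P_{Uj}$). A secondary subtlety is handling degenerate cases in which Lemma~\ref{lem:PU1_UpperBound}'s bound becomes negative (so the OIC regime is infeasible) or $P_{Uj}^{\ast}$ exceeds $P_U^{Max}$ (so the Ui-Mi link is never the binding constraint); in both situations one must verify that the $\min[\cdot]$ expressions and the outer $\max[\cdot,\cdot]$ still reduce to the correct achievable rate without invoking an empty feasible set.
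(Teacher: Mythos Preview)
Your proposal is correct and follows essentially the same route as the paper: recast the rate-based dichotomy of (\ref{eq:U1_ii_iv}) into a power-based one via Lemma~\ref{lem:PU1_UpperBound}, then in each branch collect the applicable upper bounds on $P_{Uj}$ (cancellation bound, $P_{Uj}^{\ast}$ from Lemma~\ref{lem:MaximumPU1_MaxPU2}, and $P_U^{Max}$ in the OIC branch; only the latter two in the interference branch) and take the maximum of the two resulting rates. Your explicit monotonicity argument and your handling of the degenerate cases are more careful than the paper's own proof, which simply lists the constraints in each branch without further justification.
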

\begin{proof}
	From Lemma~\ref{lem:PU1_UpperBound} follows that, $R_{Uj,B}$ can be upper-bounded as,
	\begin{equation}
			R_{Uj} \leq \begin{cases}
	
			\C{\gamma_{Uj,B}} & P_{Uj} \leq \left( \frac{P_{Ui} \left| h_{Ui,B} \right|^2}{2^{R_M} - 1} - \sigma_B^2\right) \frac{1}{\left| h_{Uj,B}\right|^2}  \\
		
			\C{\SNIR{\gamma_{Uj,B}}{\gamma_{Ui,B}}} & P_{Uj} > \left( \frac{P_{Ui} \left| h_{Ui,B} \right|^2}{2^{R_M} - 1} - \sigma_B^2\right) \frac{1}{\left| h_{Uj,B}\right|^2}
				
			\end{cases}.
			\label{eq:U1_ii_iv_PU1_Based}
	\end{equation}

	The first case from (\ref{eq:U1_ii_iv_PU1_Based}) leads that $P_{Uj}$ is upper bounded by (\ref{eq:P_U1_UpperBoundAtB}), but it is also upper bounded by (\ref{eq:PU1_M2_Upper_Bound}) due to Lemma~\ref{lem:MaximumPU1_MaxPU2} and finally by $P_{U}^{Max}$.
	
	The second case from (\ref{eq:U1_ii_iv_PU1_Based}) leads that $P_{Uj}$ is upper bounded by $P_{U}^{Max}$ and by (\ref{eq:PU1_M2_Upper_Bound}) due to Lemma~\ref{lem:MaximumPU1_MaxPU2}.
\end{proof}

The procedure to obtain then maximum rate, $R_{Uj}$, is formalized in Algorithm~\ref{alg:MaximumRU1ComputationFCSI} according with the guidelines provided by Theorem~\ref{thm:MaximumRU1}.
\begin{algorithm}[t]
 \DontPrintSemicolon
 \SetAlgoLined
 \LinesNumbered

	\textbf{Input Parameters:}
	$P_{Uj}^{Max}$, $R_M$, $\left|h_{Uj,B}\right|^2$, $\left|h_{Ui,B}\right|^2$, $\left|h_{Ui,M}\right|^2$, $\left|h_{Uj,M}\right|^2$\;
	
	\textbf{Ouput Parameters:}
	$P_{Uj}$ and $R_{Uj}$\;
	
	\textbf{Algorithm:}\;
	$P_{Uj}^{(1)} = \min\left[ P_{U}^{Max}, (\ref{eq:PU1_M2_Upper_Bound}), (\ref{eq:P_U1_UpperBoundAtB}) \right]$\;
	$R_{Uj}^{(1)} = \C{\gamma_{U1,B}^{(1)}}$\;
	
	$P_{Uj}^{(2)} = \min \left[P_{U}^{Max}, (\ref{eq:PU1_M2_Upper_Bound}) \right]$\;
	$R_{Uj}^{(2)} = \C{\SNIR{\gamma_{Uj,B}^{(2)}}{\gamma_{Ui,B}}}$\;
	
	\eIf{$R_{U1}^{(1)} \geq R_{Uj}^{(2)}$}{
		$P_{Uj} = P_{Uj}^{(1)}$\;
		$R_{Uj} = R_{Uj}^{(1)}$\;
	}{
		$P_{Uj} = P_{Uj}^{(2)}$\;
		$R_{Uj} = R_{Uj}^{(2)}$\;
	}
	
	\caption{Procedure to compute $P_{Uj}$ and $R_{Uj}$ in the case of F-CSI.}
\label{alg:MaximumRU1ComputationFCSI}
\end{algorithm}

\subsubsection{P-CSI}
\label{sec:PCSI_ii_iv}

In the case of P-CSI, as stated in the previous sub-section and discussed in Section~\ref{sec:SystemModel}, B is aware of the instantaneous channel realizations in the cellular links, but is only aware of the mean of the channel realization $\left|h_{Ui,Mi}\right|^2$.
In the case of the underlaying scheme (U.2), as stated in Theorem~\ref{thm:MaximumRU1}, to met the target outage probability it is only necessary to find the maximum $P_{Uj}$ allowed.
This can be done by evaluating the outage probability given by the left term of (\ref{eq:Pout_ii_iv}).
\begin{equation}
	P \left[ \C{\SNIR{\gamma_{Ui,Mi}}{\gamma_{Uj,Mi}}} < R_M \right] \leq P_{Out}.
	\label{eq:Pout_ii_iv}
\end{equation}
which can be accomplished through the methodology exposed in~\cite{5714150}.
In Algorithm~\ref{alg:MaximumRU1ComputationPCSI}, is shown the search approach on the domain of $P_{Uj}$ based on the bisection method,
\begin{algorithm}[t]
 \DontPrintSemicolon
 \SetAlgoLined
 \LinesNumbered

	\textbf{Input Parameters:}
	$P_{Uj}^{Max}$, $R_M$, $\left|h_{Uj,B}\right|^2$, $\left|h_{Ui,B}\right|^2$, $E\left[\left|h_{Ui,M}\right|^2\right]$, $E\left[\left|h_{Uj,M}\right|^2\right]$, $\epsilon$\;
	
	\textbf{Ouput Parameters:}
	$P_{Uj}$,$R_{Uj}$\;
	
	\textbf{Algorithm:}\;
	
	$P_{Uj}^{min} = 0$\;
	$P_{Uj}^{max} = P_{U}^{Max}$\;
	
	$P_{Uj}^{Current} = \frac{\left(P_{Uj}^{min} + P_{Uj}^{max}\right)}{2}$\;
	
	\While{$\left| P_{Out} - P_{Out}^{*}\left( P_{Uj}^{Current} \right) \right| > \epsilon$}{
		
			$P_{Uj}^{Current} = \frac{\left(P_{U1}^{min} + P_{Uj}^{max}\right)}{2}$\;
			
			\eIf{$P_{Out}^{*}\left( P_{Uj}^{Current} \right) > P_{Out}$}{
				$P_{Uj}^{max} = P_{Uj}^{Current}$\;
			}{
				$P_{Uj}^{min} = P_{Uj}^{Current}$\;
			}
	}
	
	$\gamma_{Uj,B} = P_{Uj}^{Current} \left|h_{Uj,B}\right|^2$\;
	
	\eIf{$R_M \leq \C{\SNIR{\gamma_{Ui,B}}{\gamma_{Uj,B}}}$}{
		$R_{Uj} = \C{\gamma_{Uj,B}} $\;
	}{
		$R_{Uj} = \C{\SNIR{\gamma_{Uj,B}}{\gamma_{Ui,B}}}$
	}
	\caption{Procedure to compute $P_{Uj}$ and $R_{Uj}$ in the case of P-CSI, given an $P_{Out}$ using a bi-section search method with stopping criterium $\epsilon$ where $P_{Out}^{*}\left(\dot{P}_{Uj}\right)$ is computed following the method exposed in~\cite{5714150}.}
\label{alg:MaximumRU1ComputationPCSI}
\end{algorithm}

\section{Numerical Results}
\label{sec:NumericalResults}

In this section the performance of the proposed underlaying and respective reference schemes are compared through numerical results.

\subsection{Scenario Settings and Performance Metrics}
\label{sec:ScenarioSettings}

The performance comparison is performed in two different network settings, $S_1$ and $S_2$, for which the relevant parameter settings are listed in Table~\ref{tab:SimulationScenarioSettings}.
\begin{table*}[t]
	\centering
		\begin{tabular}{ l c }
		\hline
		\textbf{Parameter} & \textbf{Value Range}\\ \hline
		$R_M$ & $0.5 [bps]$\\	
		$S_M$ & $10$ \\
		$P_{Out}$ & $0.1$\\
		$\epsilon$ & $0.01$\\
		$P_B^{max}$ & $46 [dBm]$~\cite{HarriHolma2011} \\
		$P_U^{max}$ & $24 [dBm]$~\cite{HarriHolma2011} \\
		$d_{max}$  & $500 [m]$ \\
		$d$ & $S_1 : d_{max}/2 [m]$ and $S_2 : \left[10, d_{max}/2\right] [m]$ \\
		$\left| h_{M_i,U_i} \right|^2$  & $-60 [dB]$~\cite{Yazdandoost2009,6227423}\\		
		$\left| h_{U,B} \right|^2$, $\left| h_{U_j,M_i} \right|^2$ & $-\left(35.74 \log_{10}\left(d\right) + 30.94\right) [dB]$~\cite[C1,NLOS,$f_c=2.1 GHz$]{WINNERII2007}\\
		$\sigma_{M}^2$	& $-97.5 [dBm]$~\cite{HarriHolma2011}\\
		$\sigma_{U}^2$  & $-97.5 [dBm]$~\cite{HarriHolma2011}\\
		$\sigma_{B}^2$  & $-116.5 [dBm]$~\cite{HarriHolma2011}\\
		\hline
		\end{tabular}
	\caption{Simulation scenario settings.}
	\label{tab:SimulationScenarioSettings}
\end{table*}

The first network setting, denoted by $S_1$, is the one of the network depicted in Figure~\ref{fig:SystemModel_and_SchedulingOptions}. While, the second network setting, denoted by $S_2$, shows the performance of the proposed underlaying schemes in the presence of spatial diversity by considering a network composed by four Ui-Mi pairs, deployed uniformly across the cell, and one B.

In the case of $S_2$, there is the need to define a scheduler that selects the $(i,j)$ pair to serve at each scheduling slot, e.g. for the (U.1) scheme the $(i,j)$ pair denotes the pairs Mi-Ui and Mj-Uj.
This selection is based on the maximization of the scheduler's scheduling metric.
The schedulers here considered are the MaxR and Proportional Fair (PF)~\cite{Pratas2007}, for which the scheduling metrics for underlaying schemes (U.1) and (U.2) are listed in Table~\ref{tab:SchedulingAlgorithms}.
The MaxR scheduler selects always the pair $(i,j)$ that maximizes the rate of the link associated with $j$ at a given scheduling slot; while the PF selects the pair $(i,j)$ so that the average rate over time $R_{B,U_j}^\ast$ for (U.1) and $R_{U_j,B}^\ast$ for (U.2) all possible $j$ links converges to the network mean rate.
\begin{table*}[t]
	\centering
		\begin{tabular}{ l c c}
		\hline
		\textbf{Scheduling Metric} & (U.1) & (U.2) \\ \hline
		MaxR & $R_{B,U_j}$ & $R_{U_j,B}$ \\	
		Proportional Fair & $R_{B,U_j}/R_{B,U_j}^\ast$ &  $R_{U_j,B}/R_{U_j,B}^\ast$\\
		\hline
		\end{tabular}
	\caption{Scheduling Algorithm Metrics.}
	\label{tab:SchedulingAlgorithms}
\end{table*}

The performance of the proposed underlaying schemes are compared with the respective reference schemes, using the following performance metrics.

The mean normalized rate, $E\left[\Gamma_x\right]$, which is the mean of the instantaneous rate normalized by the number of time slots required by each of the scheduling schemes in the direction $x$.
In the downlink is given by,
\begin{equation}
	E\left[\Gamma_D\right] = E\left[\frac{R_{B,Uj}}{N}\right]
\end{equation}
while in the uplink by,
\begin{equation}
	E\left[\Gamma_U\right] = E\left[\frac{R_{Uj,B}}{N}\right]
\end{equation}
where $N$ is the number of time slots in the scheduling frame, e.g. in (R.1) $N=2$, while in (U.1) $N=1$.

The mean active time per link gives the fraction of time that each link is active.
Note that in the reference schemes each link cannot be active more than $50\%$ of the time, while for the underlaying schemes there is no such limitation.

\subsection{Performance Evaluation of (U.1)}
\label{sec:PerformanceEvaluationOfi_iii}

The performance of the (U.1) scheme in the network setting $S_1$ is depicted in Figure~\ref{fig:Reference_And_Mi-Ui+B-Uj_Comparison}.
The underlaying scheme (U.1) is more effective in the presence of strong interferers, which explains why when B transmits at full power ($46 dBm$), the $E\left[\Gamma_D\right]$ achieved by the underlaying scheme for F-CSI and P-CSI is higher than the one for (R.1).
It should be noted that P-CSI has lower $\bar{\Gamma}_D$ than F-CSI, since it has less information on the channel state and therefore takes a more conservative value for $R_{B,U_2}$.
\begin{figure}
	\centering
		\includegraphics[width=\linewidth]{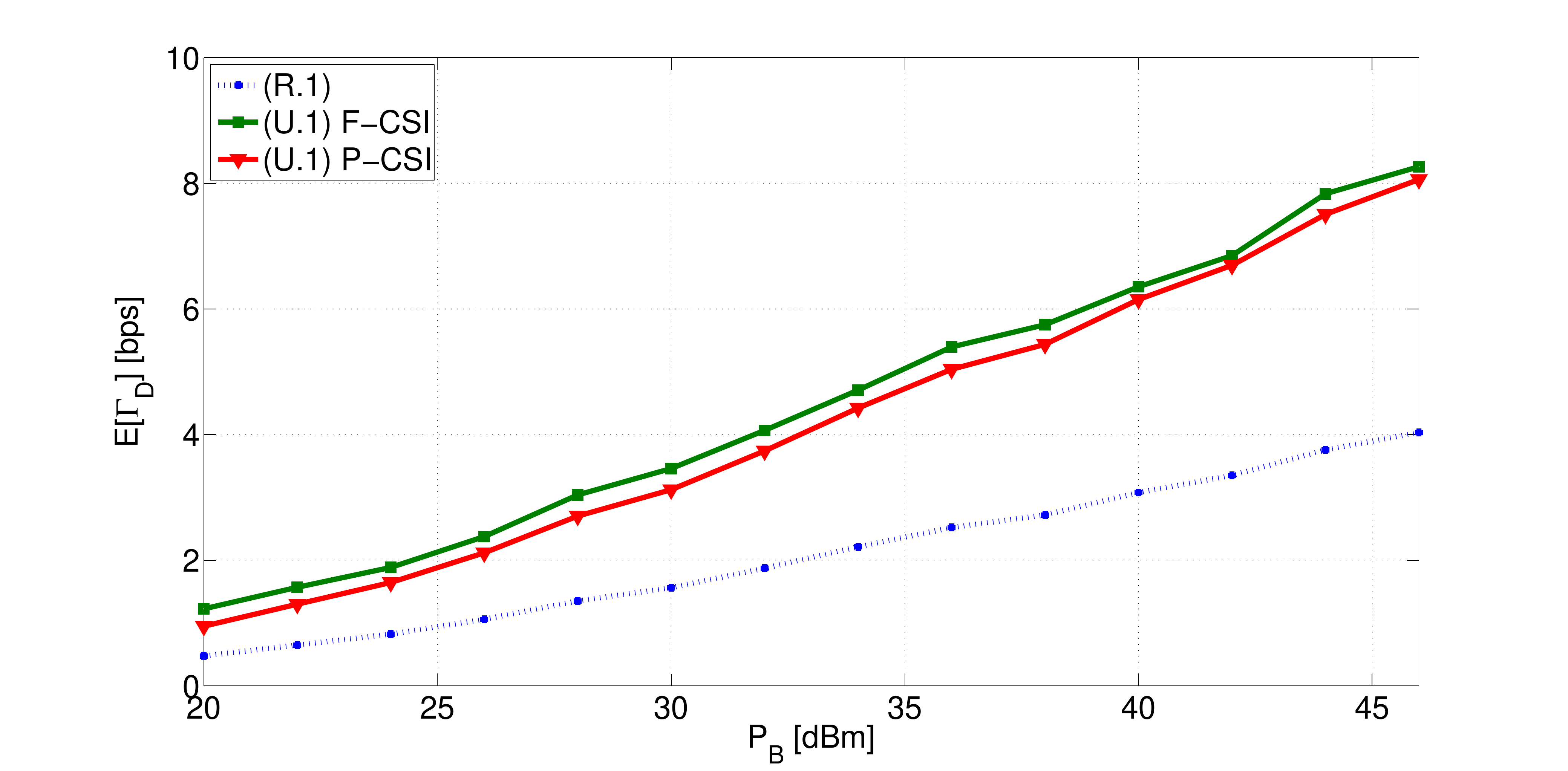}
	\caption{Comparison of (U.1) and (R.1) in $S_1$.}
	\label{fig:Reference_And_Mi-Ui+B-Uj_Comparison}
\end{figure}

In Figure~\ref{fig:i_iii_network_evaluation} are depicted the performance of (U.1) and (R.1) in the network setting $S_2$.
It can be observed, that (U.1) allows the links to be active more often and an higher $E\left[\Gamma_D\right]$ than (R.1), in both schedulers.
We note that the reason why communications occur more often is due to the underlaying of the information streams in the same time slot.
\begin{figure}
	\centering
		\includegraphics[width=\linewidth]{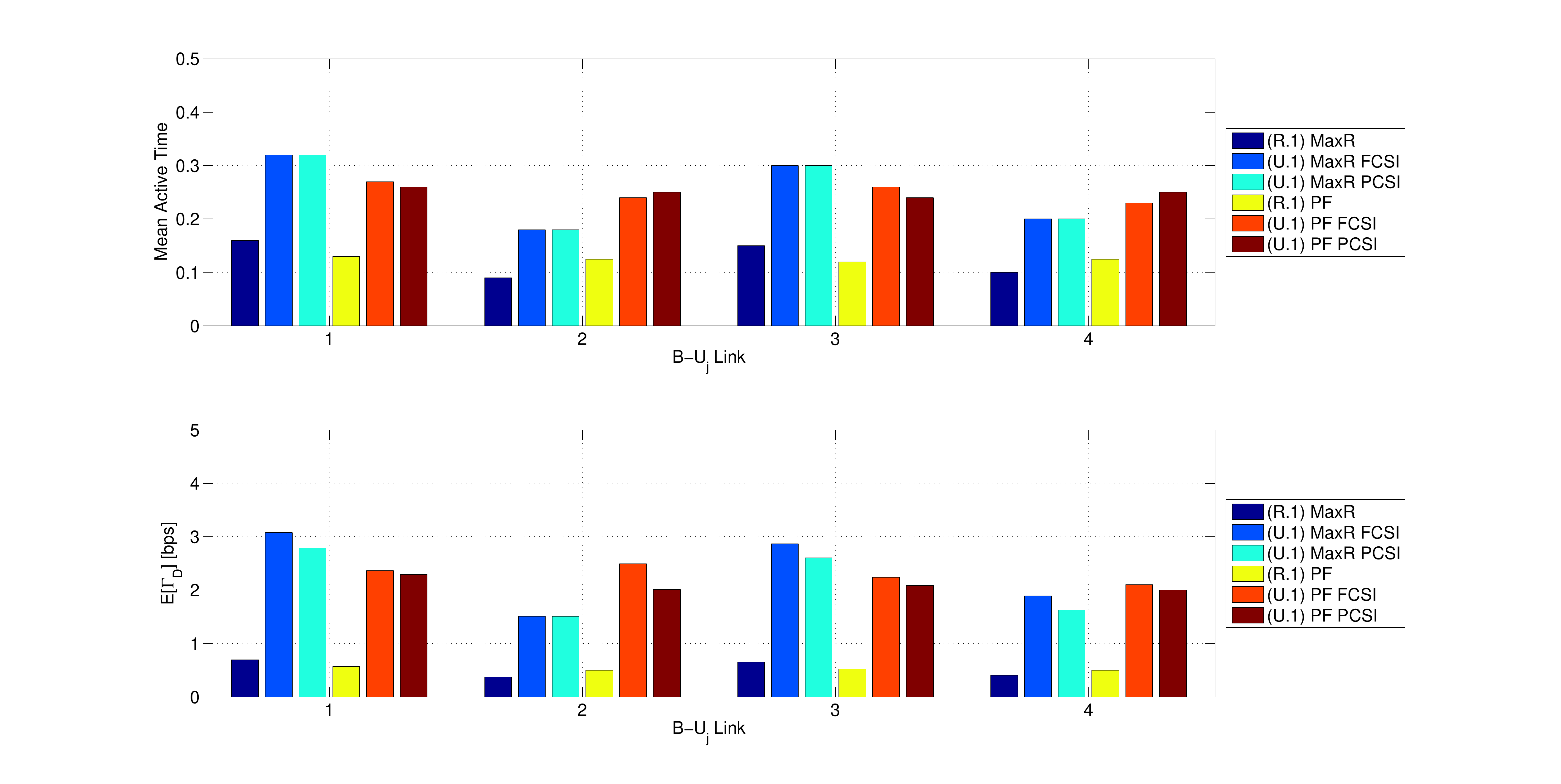}
	\caption{Comparison of (U.1) and (R.1) in $S_2$.}
	\label{fig:i_iii_network_evaluation}
\end{figure}

\subsection{Performance Evaluation of (U.2)}
\label{sec:PerformanceEvaluationOfIiIv}

In Figure~\ref{fig:Reference_And_Uj-B+Ui-Mi_Comparison} is depicted the comparison of the performance of (U.2) and (R.2) in the network setting $S_1$.
As expected, the achievable $E\left[\Gamma_U\right]$ for (U.2) is much higher than (R.2).
Although, as exposed in Section~\ref{sec:UnderlayedSchemesCharacterization}, this gain is only possible when Ui transmits with maximum power to Mi, so to maximize both the outage margin at Mi and to ensure that the transmission Ui-Mi can be successfully cancelled at B through OIC.
\begin{figure}
	\centering
		\includegraphics[width=\linewidth]{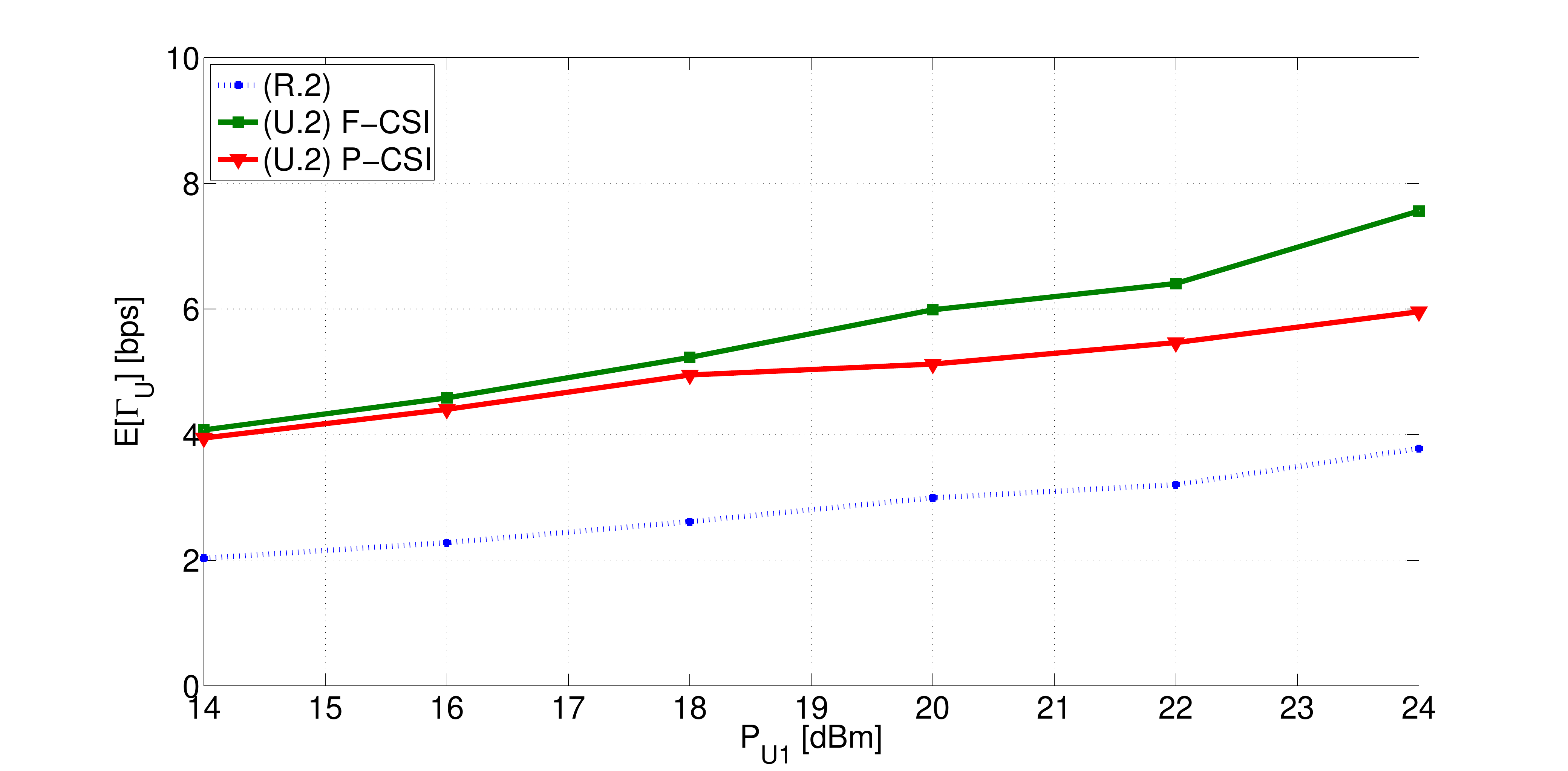}
	\caption{Comparison of (U.2) and (R.2) in $S_1$.}
	\label{fig:Reference_And_Uj-B+Ui-Mi_Comparison}
\end{figure}

In Figure~\ref{fig:ii_iv_network_evaluation} is depicted the comparison performance over successive scheduling slots of (U.2) and (R.2), with MaxCI and PF schedulers for F-CSI and P-CSI.
As expected there is a substantial gain both in mean link active time as well as in $E\left[\Gamma_U\right]$.
\begin{figure}
	\centering
		\includegraphics[width=\linewidth]{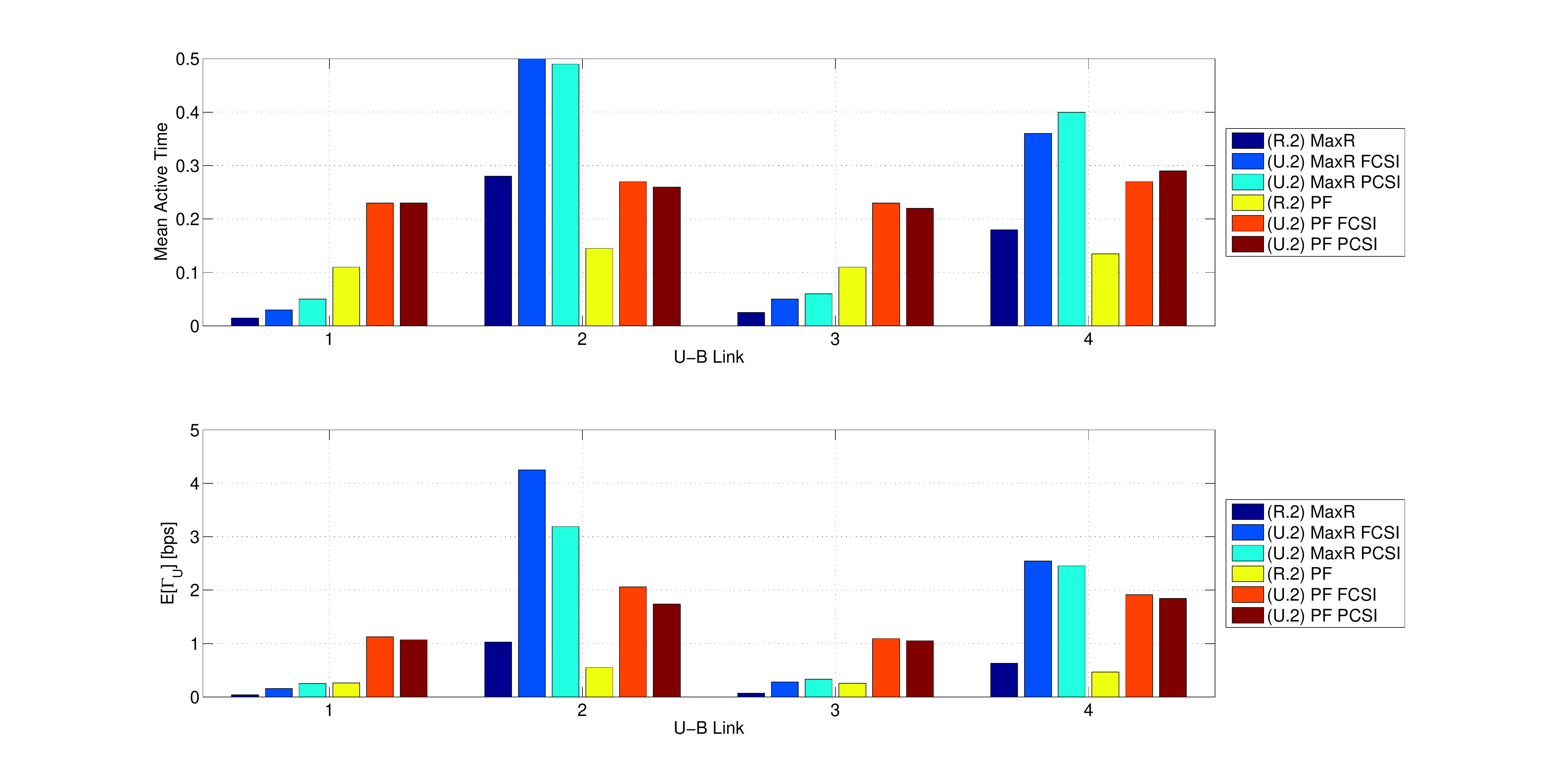}
	\caption{Comparison of (U.2) and (R.2) in $S_2$.}
	\label{fig:ii_iv_network_evaluation}
\end{figure}

\section{Conclusion}
\label{sec:Conclusion}

The current trend in cellular communications is to embrace the device-to-device (D2D) paradigm as a mean to off-load, from the cellular infrastructure, the transmissions between devices in close proximity.
Therefore, both industry and academia are focusing on how to enable these exchanges to occur directly between devices, but chiefly towards high-bit-rate applications.
There is also an increased interest on deploying Machine-Type devices within cellular networks.
These devices will share the same cellular air interface, but will be less capable than normal cellular devices.
Where this reduced capability refers to lower transmission power, less signal processing and a fixed (low) transmission rate.
The main consequence of these reduced capabilities will be the severe outage problems in particular to the devices at the cell edge.
The logical step is to allow these devices to cooperate with normal devices, making use of the device-to-device communication paradigm.

In this paper we proposed a network assisted device-to-device solution that enables the cooperation between Machine-Type cellular devices and normal cellular devices, with the goal of reducing the outage the machine-type communications while maximizing the combined cellular system rate. The model captures the main performance objectives: outage probability for a given low fixed rate for the Machine-Type communication versus rate maximization for the cellular broadband traffic. System rate maximization is accomplished by the underlaying of device-to-device and cellular links both in uplink and downlink directions.
The link underlaying is accomplished by applying the information-theoretic multiple access channel principles, opportunistic interference cancellation, as well as rate and power adaptation of the cellular links, in order to ensure that the target outage probability of machine type D2D connection is attained.We propose two underlaying schemes that act in the downlink and uplink directions of the cellular traffic, respectively. The performance benefits are demonstrated both through analysis and simulation results. 

The results shown in this paper motivate that low power, fixed rate Machine-Type devices can share the air interface with normal cellular devices.
This paves the way, that a single air interface technology can be used to connect several kinds of devices; and therefore future devices could converge to a single air interface technology instead of the current state of affairs where depending of the application there is a different applicable air interface technology.
\section*{Acknowledgment}

The research presented in this paper was partly supported by the Danish Council for Independent Research (Det Frie Forskningsr\aa d) within the Sapere Aude Research Leader program, Grant No. 11-105159 ``Dependable Wireless Bits for Machine-to-Machine (M2M) Communications'' and performed partly in the framework of the FP7 project ICT-317669 METIS, which is partly funded by the European Union

\appendices

\section{Proof of Lemma~\ref{lem:OutageMargin}}
\label{sec:ProofOfLemmaRef}

	The probability of outage, $P_{out}$, in the absence of interference, is given as,
	\begin{equation}
		P_{Out} = P \left[\gamma_{X,Y} < \gamma_{M}\right] = 1 - e^{-\frac{\gamma_{M}}{\bar{\gamma}_{X,Y}}}
	\end{equation}
	which can be manipulated to held,
	\begin{equation}
		\gamma_{M} = - \bar{\gamma}_{X,Y} \ln\left(1 - P_{Out}\right).
		\label{eq:gammaM}
	\end{equation}

	The mean SNR, $\bar{\gamma}_{X,Y}$, is given by,
	\begin{equation}
		\bar{\gamma}_{X,Y} = \frac{P_X \left| \bar{h}_{X,Y} \right|}{\sigma_Y^2}
		\label{eq:meangamma}
	\end{equation}
	where $P_X$ is the transmission power of node $X$.
	
  The average outage rate~\cite{Choudhury2007}, $R_M^{Out}$, is defined as,
	\begin{equation}
		R_M^{Out} = \left( 1 - P_{Out} \right) \log \left( 1 + \gamma_{M} \right)
		\label{outageRate}
	\end{equation}
	which can be manipulated using (\ref{eq:gammaM}) and (\ref{eq:meangamma}) to held,
	\begin{equation}
		P_X^{Min} = \frac{\sigma_Y^2}{\left|h_{X,Y}\right|^2} \cdot \frac{2^{\frac{R_M^{Out}}{1 - P_{Out}}}-1}{\ln \left( 1 - P_{Out} \right)}
		\label{eq:lowerBoundPi}
	\end{equation}
	giving the lower bound of $P_i$ so that $R_M^{Out}$ and $P_{Out}$ are met.
	
	Introducing interferer $k$, then for link X-Y to maintain the $R_M$ and $P_{Out}$ requirements, then the following relation must hold,
	\begin{equation}
		\frac{P_X^{Min}\left| h_{X,Y} \right|^2}{\sigma_Y^2} \leq \frac{P_X\frac{\left|h_{X,Y}\right|^2}{\sigma^2}}{1 + \frac{1}{\sigma^2} \sum_k^N P_k \left| h_{k,Y} \right|^2}
	\end{equation}
	which can be manipulated to held,
	\begin{equation}
		P_X \geq \underbrace{\left(1 + \frac{1}{\sigma_Y^2} \sum_k^N P_k \left| h_{k,Y} \right|^2\right)}_{S_M} P_X^{Min}
	\end{equation}
	%


\end{document}